\newcommand{\Z}{\mathbb{Z}}
\newcommand{\E}{\mathbb{E}}
\newcommand{\V}{\mathbb{V}}
\newcommand{\calW}{\mathcal{W}}
\newcommand{\supp}{\text{supp}\;}
\def\H+{{\mathbb{H}^d_+}}
\newcommand{\rmh}{\mathrm{h}}
\newcommand{\lambdadec}{\widehat{\bm{\lambda}}}
\def\supp{\mathop{\textrm{\rm supp}}\nolimits}            
\def\d{\mathop{\textrm{\rm d}}\nolimits}                  
\newcommand{\be}{\begin{equation}}
\newcommand{\ee}{\end{equation}}
\numberwithin{equation}{section}
  \newcounter{dummy} \numberwithin{dummy}{section}
  \theoremstyle{plain}
  \newtheorem*{theorem*}        {Theorem}
	\newtheorem*{conjecture*}   {Conjecture}
  \newtheorem{theorem}[dummy]          {Theorem}
  \newtheorem{lemma}[dummy]              {Lemma}
  \newtheorem*{lemma*}          {Lemma}
  \newtheorem{proposition}[dummy]       {Proposition}
  \newtheorem{remark}[dummy]           {Remark}
  \theoremstyle{remark}
  \theoremstyle{definition}
\newcommand\longleftrightarrowfill@{%
  \arrowfill@\leftarrow\relbar\rightarrow}
\definecolor{Red}{cmyk}{0,1,1,0}
\definecolor{Blue}{cmyk}{1,1,0,0}
\definecolor{DarkBlue}{rgb}{0.1,0.1,0.5}
\definecolor{Red}{rgb}{0.9,0.0,0.1}
\definecolor{DarkGreen}{rgb}{0.10,0.50,0.10}
\definecolor{DarkRed}{rgb}{0.50,0.10,0.10}
\definecolor{bleu}{RGB}{0,140,189}%
\definecolor{vermelho}{RGB}{208,2,27}    
\definecolor{verde}{RGB}{126,211,33} 
\begin{document}

\begin{center}
{\LARGE Phase Transitions in the semi-infinite Ising model with a decaying field}
\vskip.5cm
Rodrigo Bissacot and Jo{\~a}o Maia
\vskip.3cm
\begin{footnotesize}
Institute of Mathematics and Statistics (IME-USP), University of S\~{a}o Paulo, Brazil\\
\end{footnotesize}
\vskip.1cm
\begin{scriptsize}
emails: rodrigo.bissacot@gmail.com, maia.joaovt@gmail.com
\end{scriptsize}

\end{center}

\begin{abstract}
We study the semi-infinite Ising model with an external field $h_i = \lambda |i_d|^{-\delta}$, $\lambda$ is the wall influence, and $\delta>0$. This external field decays as it gets further away from the wall. We are able to show that when $\delta>1$ and $\beta > \beta_c(d)$, there exists a critical value $0< \lambda_c:=\lambda_c(\delta,\beta)$ such that, for $\lambda<\lambda_c$ there is phase transition and for $\lambda>\lambda_c$ we have uniqueness of the Gibbs state. In addition, when $\delta<1$ we have only one Gibbs state for any positive $\beta$ and $\lambda$.
\end{abstract}

\section{Introduction}The semi-infinite Ising model is a variation of the Ising model where, instead of $\Z^d$($d\geq 2$), the lattice is $\H+=\mathbb{Z}^{d-1}\times\mathbb{N}$ and the configurations space is $\Omega \coloneqq\{-1,+1\}^{\H+}$. In the semi-infinite model, the sites in the wall $\mathcal{W}=\mathbb{Z}^{d-1}\times \{1\}$ are in contact with a substrate favoring one of the spins. This influence is represented by an external field, with intensity $\lambda\in\mathbb{R}$, acting only on spins at the wall $\calW$. 
The other parameters of the model are the interaction $\bm{J}=(J_{i,j})_{i,j\in\H+}$, the external field $\bm{h}=(h_i)_{i\in\H+}$ and the inverse temperature $\beta$. We will always consider nearest neighbor ferromagnetic interaction, hence $J_{i,j}= J > 0$ whenever $|i-j|=1$ and are zero otherwise. The distance here is taken concerning the $\ell_1$-norm. The interaction $\bm{J}$ and the external field $\bm{h}$ play the same role in the energy as in the standard Ising model, so the formal Hamiltonian is 
\begin{equation*}
H_{J,\lambda, \bm{h}}(\sigma) = -\sum_{\substack{i,j\in\H+ \\ |i-j|=1}}J\sigma_i\sigma_j - \sum_{i\in\H+} h_i\sigma_i - \sum_{i\in\mathcal{W}}\lambda\sigma_i.
\end{equation*}
The role of the temperature is expressed in the formal Gibbs measure, given by
\begin{equation*}
\mu_{\bm{J},\lambda, \bm{h}}^{\beta}(\sigma)=\frac{e^{-\beta H_{J,\lambda,\bm{h}}(\sigma)}}{\mathcal{Z}^{\beta}_{\bm{J}, \lambda, \bm{h}}}     
\end{equation*}
where $\mathcal{Z}^{\beta}_{\bm{J}, \lambda, \bm{h}}$ is the partition function, a normalizing weight. Throughout this paper, we will assume $\lambda\geq0$ for simplicity. The extension of our statements for $\lambda\leq 0$ will follow from spin-flip symmetry.

This model was extensively studied by Fr{\"o}hlich and Pfister in \cite{FP-I, FP-II}, where they presented a large range of results for the model. Regarding the macroscopic behaviour of the system, it was shown that, for $\beta>\beta_c$, the system behaves exactly as the Ising model and the spins align independently, where $\beta_c$ denotes the critical inverse temperature of the Ising model. Below this critical temperature, when there is no external field, there exists a critical value $\lambda_c>0$ that determines the behavior of the spins near the wall. When the influence of the wall is bigger than $\lambda_c$, the influence of the substrate "penetrates" the model and we see a thick layer of spins near the wall aligned with the substrate phase:

\begin{figure}[htbp]
    \centering
    \includegraphics[scale=0.4]{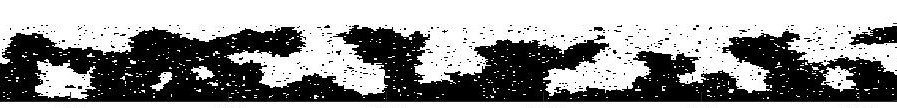}
    \caption{Complete wetting: $\lambda = 1 > \lambda_c$, $\beta=0.5$, $J=1$. The $+$ spins are black and the $-$ are white.}
\end{figure}
This regime is then called \textit{complete wetting}. When $0\leq \lambda<\lambda_c$, the influence of the substrate is only capable of creating disconnected clusters on the wall, so we say there is \textit{partial wetting}:

\begin{figure}[H]
    \centering
    \includegraphics[scale=0.5]{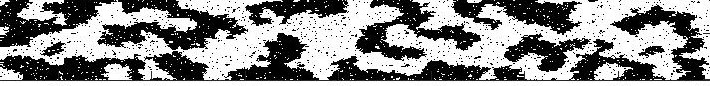}
    \caption{Partial wetting: $\lambda = 0.03 <\lambda_c$, $\beta=0.5$, $J=1$. The $+$ spins are black and the $-$ are white.}
\end{figure}
 In \cite{FP-II}, they also showed that this critical value is related to the existence of multiple Gibbs states, in the sense that, for $0\leq\lambda<\lambda_c$ there are multiple Gibbs states, and for $\lambda>\lambda_c$ we have uniqueness. The existence of this critical value $\lambda_c$ is proved using a notion of wall-free energy, defined formally as 
\begin{equation*}
    \tau_w(J,\lambda, \bm{h}) = \lim_{\Lambda\to\Z^d} \frac{1}{2|\mathcal{W}\cap\Lambda|}\ln\left[ \frac{(\mathcal{Z}^{-, J}_{\Lambda\cap\H+; \lambda,\bm{h}})^2}{Q^{-, J}_{\Lambda; \bm{h}}} \right] - \lim_{\Lambda\to\Z^d}\frac{1}{2|\mathcal{W}\cap \Lambda|}\ln\left[ \frac{(\mathcal{Z}^{+, J}_{\Lambda\cap \H+; \lambda,\bm{h}})^2}{Q^{+, J}_{\Lambda; \bm{h}}} \right],
\end{equation*}
where $Q^{\pm, J}_{\Lambda; \bm{h}}$ is the partition function of the Ising model (on $\Z^d$) with $\pm$ boundary condition on $\Lambda$, a finite set such that $\mathcal{W}\cap\Lambda \neq \emptyset$. Here $\mathcal{Z}^{\pm, J}_{\Lambda\cap\H+; \lambda,\bm{h}}$ is the partition function of the semi-infinite Ising model with $\pm$ boundary condition on $\Lambda\cap\H+$, considering the free boundary condition on $\mathcal{W}$. This quantity is suitable for measuring the wall influence when there is no external field since the partition functions of the Ising model cancel out, and the remaining terms can be written as an integral of the difference of the magnetization with respect to the wall influence $\lambda$, see Proposition \ref{Prop: Tilde_tau_as_integral}. 

\begin{figure}[H]
\centering
\includegraphics[scale=0.2]{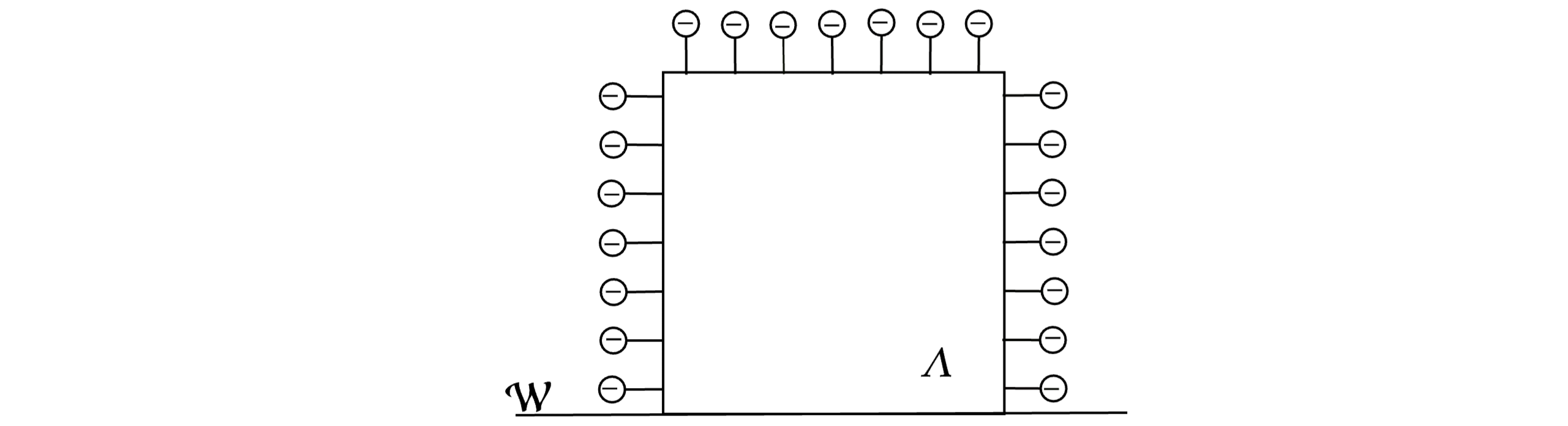}
\caption{The \textcolor{green}{-} boundary condition of the semi-infinite model.}
\end{figure}

In the Ising model, adding a nonnull constant external field disrupts the phase transition at every temperature, as a consequence of Lee-Yang theorem \cite{FV-Book, Lee-Yang.II.1952}. However, it was shown in \cite{Bissacot_Cioletti_10} that we can add an external field that decays as it goes to infinity and still preserves phase transition. This work started a streak of new results on models with decaying fields \cite{Affonso.2021, Bissacot_Cass_Cio_Pres_15, Bissacot_Endo_Enter_2017, Bissacot.Endo.18}.

One particular result \cite{Bissacot_Cass_Cio_Pres_15}, states that we can consider an intermediate external field $\bm{h}^* = (h_i^*)_{i\in\Z^d}$ given by
\begin{equation}\label{particular.external.field}
    h_i^* = \begin{cases}
            h^* &\text{ if }i=0,\\
            \frac{h^*}{|i|^\delta} &\text{ otherwise}.\\
            \end{cases}
\end{equation}
that preserves phase transition for low temperatures when $\delta>1$ and induces uniqueness at low temperatures when $\delta<1$. In the critical value $\delta=1$, there is phase transition for $h^*$ small enough. The proof of uniqueness when $\delta<1$ was extended to all temperatures in \cite{Cioletti_Vila_2016}. The argument in \cite{Bissacot_Cass_Cio_Pres_15} involves contour arguments and Peierls' bounds techniques for low temperatures, while \cite{Cioletti_Vila_2016} uses a generalization of the Edwards-Sokal representation. Both techniques are fairly distinct and complement each other, which makes the complete proof of uniqueness involved. There is no standard strategy to prove uniqueness. 

For the semi-infinite Ising model, a more natural choice of the external field is one decaying as it gets further from the wall, that is, $h_i \leq h_j$ whenever $j_d\leq i_d$. Given $h\in \mathbb{R}$, one such external field is
$\widehat{\bm{h}}=(h_i)_{i\in\H+}$ with 
\begin{equation}
    h_i=\frac{h}{i_d^\delta}
\end{equation}

for all $i\in\H+$.  Figures \ref{EF.on.Lambda} and \ref{graph.EF} shows how this external field behaves. 
   \begin{figure}[ht]
            \centering
            \includegraphics[scale=0.15]{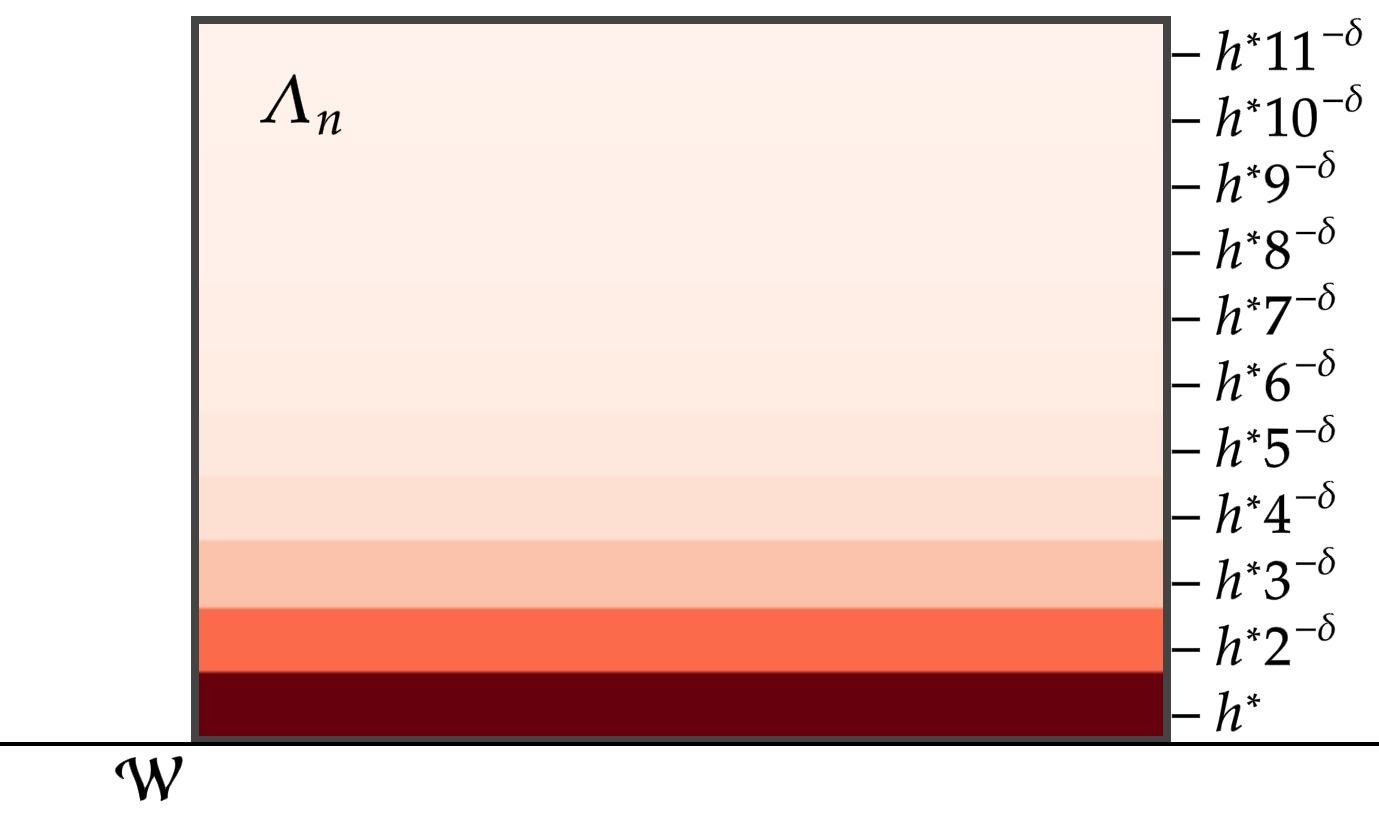}
            \caption{The influence of the external field in a box $\Lambda_n$.}
            \label{EF.on.Lambda}
    \end{figure}  
    \begin{figure}[ht]
            \centering
            \includegraphics[scale=0.15]{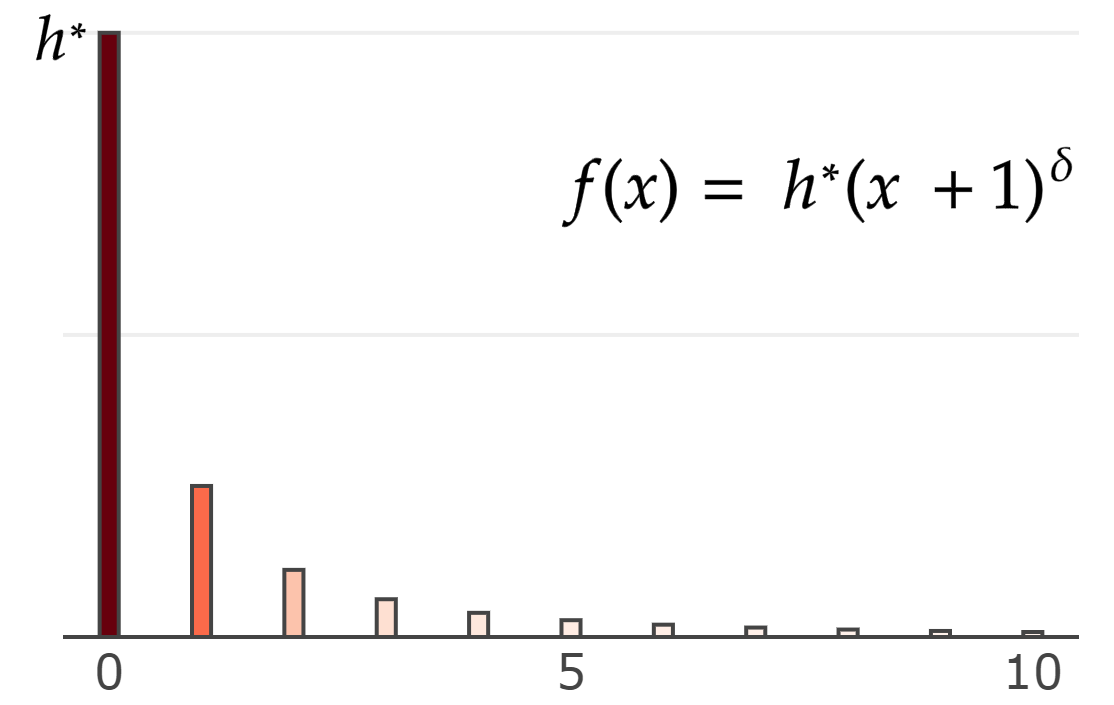}
            \caption{The external field w.r.t. the distance of a spin to the wall.}
            \label{graph.EF}
    \end{figure}
A particularly interesting choice of $h$ is $h=\lambda$, so we link the wall influence and the field. This particular case will be denoted  $\widehat{\bm{\lambda}}=(\lambda_i)_i\in\H+$, with 
\begin{equation}
    \lambda_i=\frac{\lambda}{i_d^\delta}.
\end{equation}
We can prove that, when $\delta>1$, the semi-infinite model with external field $\lambdadec$ behaves as the model with no field, so if we fix $\beta =1$, there exists a critical value $\overline{\lambda}_c(J)$ such that there are multiple Gibbs states when $0\leq\lambda<\overline{\lambda}_c(J)$, and we have uniqueness when $\overline{\lambda}_c(J)<\lambda$. At last, we show that when $\delta<1$, the semi-infinity Ising model with this choice of external field presents only one Gibbs state for any $J>0$. To simplify the notation, we choose to fix $\beta=1$ and let $J$ vary, as in the previous papers about the semi-infinite Ising model. Our results are summarized in the following theorem. 

\begin{theorem}
    Let $d\geq 2$, and let $J_c$ be the critical value of the Ising model in $\Z^d$ at $\beta=1$. Given any $\delta>0$, there exists a critical value $\overline{\lambda}_c=\overline{\lambda}_c(J,\delta)\geq 0$ such that the semi-infinite Ising model with external field $\lambdadec$ presents phase transition for all $0\leq \lambda < \overline{\lambda}_c$ and uniqueness for $\overline{\lambda}_c <\lambda$. Moreover, for $\delta>1$ and $J>J_c$, $0<\overline{\lambda}_c$. When $\delta<1$, $\overline{\lambda}_c=0$ and there is uniqueness for all $J>0$. 
\end{theorem}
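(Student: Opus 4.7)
I would organize the proof around three ingredients: (i) a monotonicity argument producing the critical value $\overline{\lambda}_c$ and the dichotomy; (ii) a Peierls-type contour bound giving phase transition when $\delta>1$, $J>J_c$, and $\lambda$ is small; and (iii) a random-cluster argument giving uniqueness when $\delta<1$ for every $\lambda>0$ and every $J>0$. Throughout, $\mu^{\pm}_{\lambda}$ denote the extremal semi-infinite Gibbs states with $\pm$ wall boundary conditions and external field $\lambdadec$.

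\emph{Dichotomy and $\overline{\lambda}_c$.} Phase transition in the semi-infinite Ising model is equivalent to $\mu^{+}_{\lambda}\neq\mu^{-}_{\lambda}$. Combining FKG/GKS with the integral representation of the wall free energy $\tau_w(\lambda)$ (Proposition \ref{Prop: Tilde_tau_as_integral}), I would show that $\tau_w$ is a monotone function of $\lambda$ and vanishes exactly when $\mu^{+}_{\lambda}=\mu^{-}_{\lambda}$, so that the set $\{\lambda\geq 0:\mu^{+}_\lambda\neq\mu^{-}_\lambda\}$ is a downward-closed interval in $[0,\infty)$. Setting $\overline{\lambda}_c$ to be its supremum (and $\overline{\lambda}_c=0$ if the set is empty) produces the claimed dichotomy. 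A high-field Peierls estimate, in which the field term dominates any surface cost, then shows $\overline{\lambda}_c<\infty$ in all cases.

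\emph{Phase transition for $\delta>1$, $J>J_c$.} At $\lambda=0$ the model reduces to the free semi-infinite Ising model, for which Fr\"ohlich-Pfister give phase transition whenever $J>J_c$. For small $\lambda>0$ I would adapt the contour argument of Bissacot-Cassandro-Cioletti-Presutti to $\H+$. The key input is column-summability: since $\delta>1$, each vertical column $\{(i_1,\dots,i_{d-1})\}\times\mathbb{N}$ contributes bounded total field $\lambda\sum_{k\geq 1}k^{-\delta}=C_\delta\lambda$. Combined with the elementary projection bound that the projection of $\Int\gamma$ onto $\calW$ has at most $|\gamma|$ sites, this yields
\begin{equation*}
\sum_{i\in\Int\gamma}\frac{\lambda}{i_d^{\delta}}\;\leq\; C_\delta\,\lambda\,|\gamma|
\end{equation*}
for every contour $\gamma$, where $|\gamma|$ is its $(d-1)$-dimensional surface area. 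The wall term $\lambda\sum_{i\in\calW}\sigma_i$ adds only a contribution of order $|\gamma\cap\calW|\leq|\gamma|$. The total field gain from flipping $\gamma$ is therefore dominated by the surface cost $J|\gamma|$ once $\lambda$ is small compared to $J$, and standard Peierls combinatorics give $\mu^{+}_{\lambda}\neq\mu^{-}_{\lambda}$, so $\overline{\lambda}_c>0$.

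\emph{Uniqueness for $\delta<1$.} Here the field is not column-summable: over a box of side $n$ touching the wall the total field is of order $\lambda n^{d-\delta}$, overwhelming the surface cost $n^{d-1}$ at all sufficiently large scales. I would adapt the strategy of Cioletti-Vila via the FK representation with ghost vertex: the ghost-edge density at height $i_d$ is $1-e^{-\beta\lambda/i_d^{\delta}}$, and non-summability of $\lambda/i_d^{\delta}$ along columns forces enough ghost edges at every scale to preclude a non-ghost infinite cluster for any $J>0$, yielding $\mu^{+}_{\lambda}=\mu^{-}_{\lambda}$. The main obstacle I anticipate lies precisely here: the Cioletti-Vila argument is formulated for the translation-invariant bulk model on $\Z^d$, while our setting is $\H+$ with the extra wall term $\lambda\sum_{i\in\calW}\sigma_i$ and broken translation symmetry. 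I plan to handle this by stochastic domination against a symmetrized bulk Ising model on $\Z^d$, obtained by reflecting $\lambdadec$ across $\calW$ and absorbing the wall term into an enhancement of the bulk field near the reflection plane, thereby reducing the semi-infinite problem to the already-known $\Z^d$ uniqueness result.
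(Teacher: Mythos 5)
Your overall architecture matches the paper's in two of its three steps. The dichotomy in $\lambda$ rests on exactly the mechanism you propose: by Proposition \ref{Consequence.of.DVI}, the difference $\langle\sigma_i\rangle^+_{\bm h}-\langle\sigma_i\rangle^-_{\bm h}$ is non-increasing in every $h_j$, hence in $\lambda$, so the set of $\lambda$ with phase transition is a downward-closed interval and its supremum is the desired $\overline{\lambda}_c$ (the paper packages this through the wall free energy $\tau_w$ and its concavity, but the monotone input is the same). Your step (iii) is also the paper's strategy: reflect across $\calW$, turn the wall term into a coupling $\lambda/2$ to the ghost layer $L_0$ (interaction $\bm{J}_\lambda$), recover the semi-infinite $-$ state as a $\mp$ state on $\Z^d$ by sending an auxiliary field on $L_0$ to $+\infty$ and then lowering it by monotonicity, and finally compare the layer field $\lambda|i_d|^{-\delta}$ with the radial field \eqref{particular.external.field}, for which the Cioletti--Vila random-cluster uniqueness ($P_\infty=0$) applies; the only genuinely new ingredient is monotonicity of the RC measure in the couplings (Proposition \ref{Prop: RC_is_increasing_in_J}). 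Your sketch leaves the $\mp$-boundary-condition bookkeeping and the layer-versus-radial field comparison implicit, but the plan is sound.

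The genuine gap is in step (ii). A Peierls/contour bound of the type you describe requires the surface cost $J|\gamma|$ to beat both the contour entropy and the field gain, so it only works for $J$ large; the Bissacot--Cassandro--Cioletti--Presutti argument you want to adapt is likewise a low-temperature argument. The theorem asserts $\overline{\lambda}_c>0$ for \emph{every} $J>J_c$, all the way down to the critical coupling, and no Peierls estimate reaches that regime. You also cannot bootstrap from Fr\"ohlich--Pfister's phase transition at $\lambda=0$: increasing the field \emph{decreases} $\langle\sigma_i\rangle^+-\langle\sigma_i\rangle^-$, so phase transition at $\lambda=0$ gives no information for $\lambda>0$, and phase transition is not an open condition. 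The paper closes this with a soft thermodynamic argument instead: $\lambda\mapsto\tau_w(J,\lambdadec)$ is concave (hence continuous), vanishes at $\lambda=0$, and dominates the zero-field wall free energy by monotonicity in the $\rmh_\ell$ (inequality \eqref{Eq: tau_smaller_than_tilde_tau}), while $\tau_w(J,\lambda,0)=\tau(J)>0$ for $\lambda\geq\lambda_c$ and $J>J_c$ by Lebowitz--Pfister. Via the integral representation of Proposition \ref{Prop: Tilde_tau_as_integral}, the integrand must then be strictly positive on an initial interval of $\lambda$, which is precisely $\overline{\lambda}_c>0$. You should replace your Peierls step by this comparison, or else weaken your claim to $J$ sufficiently large.
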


The paper is organized as follows: in section 2 we introduce the model, some important definitions, and standard results. In section 3, we study the case $\delta>1$, showing the existence of the critical parameter $\overline{\lambda}_c>0$. In section 4, we prove that, for $\delta<1$, we have uniqueness. In the last section, we make a list of open problems for Ising-type models with fields.

\section{Preliminaries}   We replace  $\Z^d$ by the semi-lattice where we have a wall $\mathbb{H}^d_+ \coloneqq \{ i=(i_1, i_2, \dots, i_d)\in \mathbb{Z}^d : i_d\geq 1 \}.$

The {\it wall} is the subset $\mathcal{W}\coloneqq \{ i\in \mathbb{H}^d_+ : i_d = 1\}$. To represent the influence of the wall in its neighbors, we introduce a parameter $\lambda$ and, for the finite sets $\Lambda \Subset \mathbb{H}^d_+$, we define the Hamiltonian

\begin{equation}\label{SI.Hamiltonian}
    \mathcal{H}_{\Lambda; \lambda, \bm{h}}^{\bm{J}}(\sigma) \coloneqq -\sum_{\substack{i \sim j \\ \{i,j \} \cap \Lambda \neq \emptyset}} J_{i,j}\sigma_i\sigma_j - \sum_{i\in\Lambda}h_i\sigma_i - \sum_{i\in \Lambda\cap \mathcal{W}} \lambda\sigma_i.
\end{equation}
Here, ${\bm{h}=(h_i)_{i\in \mathbb{H}^d_+}}$ is the \textit{external field} and the interaction ${\bm{J}=(J_{i,j})_{i,j\in\mathbb{H}^d_+}}$ is non-negative for all $i,j\in\mathbb{Z}^d$, so we say the model is \textit{ferromagnetic}. Moreover, $i\sim j$ denotes $|i-j|=1$ with the norm being the $\ell_1$-norm. The local configurations in $\Lambda\Subset\mathbb{Z}^d$ with boundary condition $\eta\in\Omega$ are the elements of $\Omega_\Lambda^\eta \coloneqq \{ \omega\in \Omega : \omega_i=\eta_i \text{ for all } i \notin \Lambda \}$. The \textit{finite Gibbs measure in $\Lambda$ with $\eta$-boundary condition} is given by
\begin{equation}\label{eq:def.local.gibbs.measure}
    \mu_{\Lambda; \lambda, \bm{h}}^\eta (\sigma) \coloneqq \mathbbm{1}_{\{  \sigma\in\Omega_{\Lambda}^\eta\}}\frac{e^{-\mathcal{H}_{\Lambda; \lambda, \bm{h}}^{\bm{J}}(\sigma)}}{\mathcal{Z}^{\eta, \bm{J}}_{\Lambda; \lambda,\bm{h}}},
\end{equation}
 where $\mathcal{Z}^{\eta, \bm{J}}_{\Lambda; \lambda,\bm{h}}\coloneqq \sum_{\sigma \in\Omega_\Lambda^\eta}e^{-\mathcal{H}_{\Lambda; \lambda, \bm{h}}^{\bm{J}}(\sigma) }$ is the usual \textit{partition function}. This measure is defined over the $\sigma$-algebra generated by the cylinder sets, which coincides with the Borel $\sigma$-algebra when considering the product topology on $\Omega$, a compact space. Then, the set of probability measures defined over the Borel sets is a weak* compact set. 
 
To construct the infinite measures we consider sequences of finite subsets $(\Lambda_n)_{n\in\mathbb{N}}$  such that, for any subset $\Lambda\subset\H+$, there exists $N=N(\Lambda)>0$ such that $\Lambda\subset\Lambda_n$ for every $n>N$. We say such sequences \textit{invades} $\mathbb{H}^d_+$ and we denote it by $\Lambda_n\nearrow\H+$. A particularly important sequence that invades $\H+$ is the finite boxes 
$$\Lambda_{n, m} \coloneqq \{i\in \mathbb{H}^d_+ : i_d\leq m, -n \leq i_k \leq n \text{ for } k=1,\dots,d-1 \},$$ with $n,m\geq 0$. We define also $\mathcal{W}_{n,m} \coloneqq \mathcal{W}\cap \Lambda_{n, m}$ the restriction of the wall for these boxes. The set of \textit{Gibbs measures} $\mathcal{G}_{\bm{J}}$ is the closed convex hull of all the weak* limits obtained by sequences invading $\H+$:
\begin{equation}
\mathcal{G}_{\bm{J}} \coloneqq \overline{\text{conv}}\{\mu: \mu = w^*\text{-}\lim_{\Lambda^\prime \nearrow \H+}\mu_{\Lambda^\prime; \lambda, \bm{h}}^{\omega}\}.
\end{equation}
To simplify the notation, we are omitting the dependency on $\bm{h}$ and $\lambda$ in the definition of $\mathcal{G}_{\bm{J}}$. When $|\mathcal{G}_{\bm{J}}|>1$, we say that there is \textit{phase transition}, and when $|\mathcal{G}_{\bm{J}}|=1$, we have \textit{uniqueness}. 

The Hamiltonian of the Ising model in $\Lambda\Subset \mathbb{Z}^d$ is given by
\begin{equation}\label{Hamiltonian.Ising}
    \mathcal{H}_{\Lambda; \bm{h}}^{\bm{J}}(\sigma) = -\sum_{\substack{i \sim j \\ \{i,j \} \cap \Lambda \neq \emptyset}} J_{i,j}\sigma_i\sigma_j - \sum_{i\in\Lambda}h_i\sigma_i
\end{equation}
where ${\bm{J}=(J_{i,j})_{i,j\in\mathbb{Z}^d}}$ a family of non-negative real number and  ${\bm{h}=(h_i)_{i\in \mathbb{Z}^d}}$ is the external field with $h_i\in\mathbb{R}$ for all $i\in\mathbb{Z}^d$.

Replacing the semi-infinite lattice $\H+$ by the whole lattice $\Z^d$ in the definitions above, the set of \textit{Ising Gibbs measures} with ferromagnetic interaction $\bm{J}=(J_{i,j})_{i,j\in \Z^d}$ and external field $\bm{h}=(h_i)_{i\in\Z^d}$ is
\begin{equation*}
    \mathcal{G}_{\bm{J}}^{IS} \coloneqq \overline{\text{conv}}\{\mu: \mu = w^*\text{-}\lim_{\Lambda^\prime \nearrow \Z^d}\mu_{\Lambda^\prime; 0, \bm{h}}^{\omega}\}.
\end{equation*}


Similarly, we can define finite volume Gibbs measures for the Ising model using the Hamiltonian \eqref{Hamiltonian.Ising}. The \textit{finite Gibbs measure in $\Lambda$ with $\eta$-boundary condition for the Ising model} is completely determined by the integral of the local functions $f$, 
\begin{equation*}
    \langle f \rangle_{\Lambda; \bm{h}}^\eta \coloneqq \mu_{\Lambda; 0, \bm{h}}^\eta (f) = (Q^{\eta, \bm{J}}_{\Lambda; \bm{h}})^{-1} \sum_{\sigma\in\Omega_\Lambda^\eta} f(\sigma)e^{-\mathcal{H}_{\Lambda; \bm{h}}^{\bm{J}}(\sigma)}, 
\end{equation*}
where 
\begin{equation*}
    Q^{\eta, J}_{\Lambda; \bm{h}} = \sum_{\sigma\in \Omega^{\eta}_{\Lambda}}\exp{ \{ \sum_{\substack{i \sim j \\ \{i,j \} \cap \Lambda \neq\emptyset}} J\sigma_i\sigma_j + \sum_{i\in\Lambda} h_i\sigma_i \} }
\end{equation*}
is the usual partition function. 

The semi-infinite model inherits several properties from the usual Ising model in $\mathbb{Z}^d$, since, 
for $\Lambda\Subset\mathbb{H}^d_+$, the state $\langle f\rangle^{\eta}_{\Lambda; \lambda, \bm{h}}$ is the Ising state with interaction $(J^\lambda_{i,j})_{i,j\in\mathbb{Z}^d}$ given by
\begin{equation}\label{J.for.the.semi.infinite}
J^\lambda_{i,j}=\begin{cases} J_{i,j} &\text{ if } \{i,j\}\subset\mathbb{H}^d_+, \\ \lambda &\text{ otherwise,} \end{cases}
\end{equation}
 and boundary condition $\eta^+$ given by, for all $i\in\mathbb{Z}^d$,
 \begin{equation*}\label{eta^+}
     \eta^+_i = \begin{cases} \eta &\text{ if } i\in\mathbb{H}^d_+, \\ +1 &\text{ otherwise.} \end{cases}
 \end{equation*}

We are interested in phase transition results for a uniform interaction $\bm{J} \equiv J>0$, $\lambda>0$ and $h_i\geq 0$ for all $i\in\H+$. For the Ising model with no external field, the existence of two distinct states translates to the fact that, on a macroscopic scale, the spins will align in the same direction. If we consider the plus boundary condition, when we look at boxes, we will see spins at the plus phase (positive mean) at low temperatures. The same occurs for the minus boundary condition.
        
        In the semi-infinite Ising model, again with no field, the macroscopic consequence of the phase transition is different, it has to do with the existence of a layered phase separating the wall from the bulk. Writing the semi-infinite model interaction as in (\ref{J.for.the.semi.infinite}), the minus state in a box is given by the boundary condition like 
       in  Figure \ref{minus.b.c},
\begin{figure}[ht]
\centering
\includegraphics[scale=0.5]{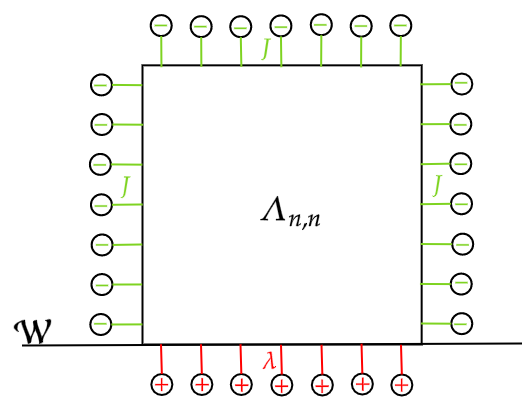}
\caption{The $\textcolor{red}{+}$ spins in the wall compete with the $\textcolor{green}{-}$ on the boundary.}
\label{minus.b.c}
\end{figure}
        so if $\lambda$ is big enough we have the phenomenon of \textit{Complete wetting}, where the wall forces to spin to align in the plus direction. This macroscopic behavior is shown in \cite{FP-II} and an improvement for low temperatures can be found on \cite{Holicky_Zahradnik_93}. To better understand this type of phenomenon, we recommend the survey paper \cite{IV18}. The surface free energy of the wall is a quantity that tries to identify whether or not we have complete wetting. 

Consider the sequence that invades $\H+$ given by $\Lambda_n \coloneqq \Lambda_{n,n}$. Take  $\Lambda^\prime_n$ as the reflection of $\Lambda_n$ with respect to the line ${\mathcal{L} \coloneqq \{ (i_1,\dots,i_d)\in \mathbb{Z}^d : i_d = \frac{1}{2}\}}$. Similarly, define the reflection of the walls $\mathcal{W}_n \coloneqq [-n,n]^{d-1}\times \{1\}$ as $\mathcal{W}^\prime_n \coloneqq [-n,n]^{d-1}\times\{0\}$ and denote $\Delta_n \coloneqq \Lambda_n \cup \Lambda^\prime_n$ the extended box. For any summable sequence of positive real numbers $\mathrm{h}=(\mathrm{h}_\ell)_{\ell\geq 1}$, let $\bm{\mathrm{h}} = \{\mathrm{h}_{i}\}_{i\in\H+^d}$ be the external field induced by $\mathrm{h}$, that is, $\mathrm{h}_i\coloneqq\mathrm{h}_{i_d}$, with $i_d$ being the last coordinate of $i\in\Z^d$. We also denote $\overline{\bm{\rmh}} = \{\overline{\rmh}_i\}_{i\in\Z^d}$ the natural extension of $\bm{\rmh}$ to $\Z^d$, defined by
\begin{equation}\label{Def: overline_h}
    \overline{\rmh}_i = \begin{cases}
                            \rmh_i & \text{ if }i\in\H+ \\
                            \rmh_{-i + e_d} & \text{ if }i\in\Z^d\setminus\H+,
                        \end{cases}
\end{equation}
where $e_d=(0,\dots,0,1)$ is a canonical base vector. Given any $J>0$ and summable  $\mathrm{h}=(\mathrm{h}_\ell)_{\ell\geq 0}$, the \textit{surface free energy} for the $+$-boundary condition and $-$-boundary condition are, respectively, 

\begin{equation}\label{Def.F+}
    F^{+}( J, \lambda, \bm{\rmh}) \coloneqq \lim_{n\to \infty} -\frac{1}{2|\mathcal{W}_n|}\ln\left[ \frac{(\mathcal{Z}^{+, J}_{n; \lambda,\bm{\rmh}})^2}{Q^{+, J}_{\Delta_n; 0}} \right] 
\end{equation}
and
\begin{equation}\label{Def.F-}
    F^{-}( J, \lambda, \bm{\rmh}) = \lim_{n\to \infty} -\frac{1}{2|\mathcal{W}_n|}\ln\left[ \frac{(\mathcal{Z}^{-, J}_{n; \lambda,\bm{\rmh}})^2}{Q^{-, J}_{\Delta_n; 0}} \right].
\end{equation}

One of our first goals is to prove these limits exist. After that, we will use the \textit{wall free energy} between the $+$ and $-$ b.c., defined as 
\begin{equation}\label{Def: Wall free energy}
\tau_w(J, \lambda, \bm{\rmh}) \coloneqq F^{-}(J, \lambda , \bm{\rmh}) - F^{+}(J, \lambda, \bm{\rmh})
\end{equation}
to characterize the presence or absence of phase transition.

Our proof follows \cite{FP-I} closely, with the key difference being the definition of the surface free energy. In \cite{FP-I}, both partition functions in the definition of $F^{\pm}(J,\lambda, h)$ have the same external field $h$, so the surface free energy measures only the influence of $\lambda$ in the wall. With this difference, our definition of surface tension with no field $\tau_w(J,\lambda, 0)$ matches their definition, but the same does not hold when we add an external field. However, our definition has a downside: the limits \eqref{Def.F+} and $\eqref{Def.F-}$ only exist for external field $\bm{\rmh}$ induced by summable external fields. As the parameter $\lambda$ can be incorporated in the external field $\bm{\rmh}$ by adding $\lambda$ to $\rmh_1$, we often omit $\lambda$ in the notation.

In our proofs, we use two consequences of duplicated variables inequalities, the next two Propositions. The proof follows the same steps as the one in \cite{FP-I}, so we omit it.
\begin{proposition}\label{basta.comparar.magnetizacoes}
Let $\bm{J}=(J_{i,j})_{i,j\in\mathbb{Z}^d}$ be a non-negative interaction satisfying $J_{i,j}>0$ if $|i-j|=1$, $\bm{h}=(h_i)_{i\in\mathbb{Z}^d}$ be a non-negative external field. Then, if $\langle \sigma_i \rangle^{-}_{\bm{h}} = \langle \sigma_i \rangle^{+}_{\bm{h}}$ for some $i\in\mathbb{Z}^d$, then there exists a unique Gibbs state.  
\end{proposition}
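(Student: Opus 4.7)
The plan is to reduce uniqueness to the equality $\mu^{+}=\mu^{-}$ of the two extremal Gibbs measures, and then to propagate the assumed coincidence at a single site $i$ to all sites via a duplicated variables argument in the spirit of Ginibre and Lebowitz.

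First I would establish that $\mu^{+}\coloneqq w^{*}\text{-}\lim_{\Lambda \nearrow \mathbb{Z}^{d}}\mu_{\Lambda;\bm{h}}^{+}$ and $\mu^{-}\coloneqq w^{*}\text{-}\lim_{\Lambda \nearrow \mathbb{Z}^{d}}\mu_{\Lambda;\bm{h}}^{-}$ exist and are extremal in the FKG sense, so that every $\nu\in\mathcal{G}^{IS}_{\bm{J}}$ satisfies $\mu^{-}\preceq_{\text{FKG}}\nu\preceq_{\text{FKG}}\mu^{+}$. Since $\sigma_{j}$ is an increasing local function, this gives $\langle\sigma_{j}\rangle^{-}_{\bm{h}}\leq\langle\sigma_{j}\rangle_{\nu}\leq\langle\sigma_{j}\rangle^{+}_{\bm{h}}$ for every $j\in\mathbb{Z}^{d}$. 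Therefore, to conclude uniqueness it is enough to prove that $\langle\sigma_{j}\rangle^{+}_{\bm{h}}=\langle\sigma_{j}\rangle^{-}_{\bm{h}}$ for every $j$, since equality of expectations of all cylinder increasing monomials $\sigma_{A}$ characterises $\mu^{+}=\mu^{-}$ by the moment-determinacy of Bernoulli product spaces and a standard inclusion-exclusion argument.

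The central step is then to propagate the equality at the single site $i$ to every other site $j$. Here I would use duplicated variables: consider two independent copies $\sigma$ and $\sigma^{\prime}$ sampled from the finite volume $+$ and $-$ states on a box $\Lambda\ni i,j$, and introduce $s_{k}\coloneqq(\sigma_{k}-\sigma_{k}^{\prime})/2$. Writing the joint Hamiltonian in the $(s,t)$ variables (with $t_{k}=(\sigma_{k}+\sigma_{k}^{\prime})/2$) one obtains, as in Ginibre's inequalities, that for non-negative interactions and non-negative external fields all correlations of $s$-type monomials are non-negative. This yields the key inequality
\begin{equation*}
\langle\sigma_{j}\rangle^{+}_{\Lambda;\bm{h}}-\langle\sigma_{j}\rangle^{-}_{\Lambda;\bm{h}} \;\leq\; C\sum_{k\sim i}\beta J_{i,k}\bigl(\langle\sigma_{k}\rangle^{+}_{\Lambda;\bm{h}}-\langle\sigma_{k}\rangle^{-}_{\Lambda;\bm{h}}\bigr)^{\!1/2}\bigl(\langle\sigma_{j}\rangle^{+}_{\Lambda;\bm{h}}-\langle\sigma_{j}\rangle^{-}_{\Lambda;\bm{h}}\bigr)^{\!1/2}
\end{equation*}
or, more to the point, an inequality asserting that the non-negative quantity $D_{k}\coloneqq\langle\sigma_{k}\rangle^{+}_{\bm{h}}-\langle\sigma_{k}\rangle^{-}_{\bm{h}}$ is controlled at neighbours: $D_{i}=0$ forces $D_{k}=0$ for every $k$ with $J_{i,k}>0$. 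Since $J_{i,k}>0$ whenever $|i-k|=1$, connectedness of $\mathbb{Z}^{d}$ in the nearest-neighbour graph lets us iterate and obtain $D_{j}=0$ for every $j\in\mathbb{Z}^{d}$, which by the reduction of the previous paragraph completes the proof.

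The main obstacle is identifying and justifying the correct duplicated variables inequality in the third paragraph: the bound must relate $D_{j}$ to $D_{i}$ in a way that remains valid in the thermodynamic limit and depends only on the ferromagnetic sign of $\bm{J}$ and $\bm{h}$. Once this inequality is in place the propagation is purely combinatorial; the work lies in the algebraic manipulation of the Hamiltonian in $(s,t)$ variables to display the required non-negative correlation structure, which is precisely the computation carried out in \cite{FP-I} and which we are entitled to import.
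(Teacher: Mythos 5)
Your overall architecture --- FKG sandwiching of every Gibbs state between $\mu^{-}$ and $\mu^{+}$, followed by a duplicated-variables propagation of the single-site equality --- is indeed the route the paper intends (the paper itself omits the proof and defers to \cite{FP-I}). But as written the central step has a genuine hole. The displayed ``key inequality''
\begin{equation*}
\langle\sigma_{j}\rangle^{+}_{\Lambda;\bm{h}}-\langle\sigma_{j}\rangle^{-}_{\Lambda;\bm{h}} \;\leq\; C\sum_{k\sim i}\beta J_{i,k}\bigl(\langle\sigma_{k}\rangle^{+}_{\Lambda;\bm{h}}-\langle\sigma_{k}\rangle^{-}_{\Lambda;\bm{h}}\bigr)^{1/2}\bigl(\langle\sigma_{j}\rangle^{+}_{\Lambda;\bm{h}}-\langle\sigma_{j}\rangle^{-}_{\Lambda;\bm{h}}\bigr)^{1/2}
\end{equation*}
cannot do the job you assign to it: its right-hand side involves $D_{k}\coloneqq\langle\sigma_{k}\rangle^{+}-\langle\sigma_{k}\rangle^{-}$ only for the \emph{neighbours} $k$ of $i$, so setting $D_{i}=0$ forces nothing on the right to vanish, and even after dividing by $D_{j}^{1/2}$ the inequality propagates zeros inward (from all neighbours of a site to the site) rather than outward from the single site $i$ where the hypothesis lives. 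What you actually need is an inequality of the form $D_{k}\leq f(D_{i})$ with $f(0)=0$ for each $k\sim i$ (this is the content of the duplicated-variables estimates in the appendix of \cite{FP-I}, which exploit that $J_{i,k}>0$ \emph{strictly}); you assert that such a statement follows from the $(s,t)$ change of variables but never write it down, and the one inequality you do write is not it. Since this is the entire mathematical content of the proposition --- everything else is soft --- the proof is not complete as it stands.

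Two smaller points. First, your reduction step conflates ``$\langle\sigma_{j}\rangle^{+}=\langle\sigma_{j}\rangle^{-}$ for every $j$'' with ``$\langle\sigma_{A}\rangle^{+}=\langle\sigma_{A}\rangle^{-}$ for every finite $A$''; the latter does characterise $\mu^{+}=\mu^{-}$, but the passage from one-point equality to it needs an argument. The cheapest one in your setup is Strassen's theorem: $\mu^{-}\preceq_{\mathrm{FKG}}\mu^{+}$ gives a monotone coupling $\pi$ with $\sigma_{j}\geq\sigma_{j}'$ $\pi$-a.s.\ for all $j$, so equality of all one-point functions forces $\sigma=\sigma'$ $\pi$-a.s.; alternatively use the duplicated-variables bound $\langle\sigma_{A}\rangle^{+}-\langle\sigma_{A}\rangle^{-}\leq\sum_{j\in A}D_{j}$, valid for $\bm{h}\geq 0$. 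Second, the inequalities you would be importing are exactly those recorded in Proposition \ref{Consequence.of.DVI} of the paper; it would strengthen the write-up to derive the propagation directly from \eqref{increasing.correlations} and its companion rather than appealing to an unspecified computation in \cite{FP-I}.
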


\begin{proposition}\label{Consequence.of.DVI}
Let $\lambda\geq0$ and $\bm{h}=(h_{i})_{i\in\mathbb{H}_d^+}$ be a family of non-negative real numbers. Then, for all $\Lambda\Subset \mathbb{H}_d^+$ and all $i,j\in\mathbb{H}_d^+$
\begin{equation}
    \langle \sigma_i\sigma_j \rangle_{\Lambda; \lambda,\bm{h}}^-\leq  \langle \sigma_i\sigma_j \rangle_{\Lambda; \lambda,\bm{h}}^+ \label{increasing.correlations}
\end{equation}
and
\begin{equation}
\langle \sigma_i\sigma_j \rangle_{\Lambda; \lambda,\bm{h}}^+ -  \langle \sigma_i \rangle_{\Lambda; \lambda,\bm{h}}^+ \langle \sigma_j \rangle_{\Lambda; \lambda,\bm{h}}^+ \leq   \langle \sigma_i\sigma_j \rangle_{\Lambda; \lambda,\bm{h}}^- -  \langle \sigma_i \rangle_{\Lambda; \lambda,\bm{h}}^- \langle \sigma_j \rangle_{\Lambda; \lambda,\bm{h}}^-
\end{equation}
\end{proposition}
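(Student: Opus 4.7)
My plan is to apply the Ginibre/Percus duplicated variables (DVI) method, which is the tool the authors cite as giving both inequalities. The preliminary step is to reduce the semi-infinite Gibbs state to a ferromagnetic Ising state on $\Z^d$ via the identification recalled around \eqref{J.for.the.semi.infinite}: $\langle \cdot \rangle^{\pm}_{\Lambda; \lambda, \bm{h}}$ coincides with the Ising state on $\Z^d$ with ferromagnetic couplings $J^\lambda \geq 0$, non-negative external field (equal to $\bm{h}$ on $\H+$ and vanishing outside), and $\pm$ boundary condition. Under the standing assumption $\lambda\geq 0$, $\bm{h}\geq 0$ this places us in the standard regime for DVI.

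I would then introduce two independent copies $\sigma,\sigma'$ of the model and pass to the variables
\[
t_i = \tfrac{1}{2}(\sigma_i + \sigma'_i), \qquad s_i = \tfrac{1}{2}(\sigma_i - \sigma'_i).
\]
Using $\sigma_i\sigma_j + \sigma'_i\sigma'_j = 2(s_is_j + t_it_j)$ and $\sigma_i + \sigma'_i = 2t_i$, the doubled Boltzmann weight becomes a weight on the $(s,t)$-system with non-negative couplings, to which Ginibre's positivity lemma applies: expectations of monomials $\prod s_{i_k}\prod t_{j_\ell}$ are non-negative provided the total degree in $s$ is even. For \eqref{increasing.correlations} I would take one copy with $+$ boundary condition and the other with $-$, use the algebraic identity
\[
\sigma_i\sigma_j - \sigma'_i\sigma'_j = 2(s_it_j + t_is_j),
\]
and spin-flip the $-$-copy so that both copies carry the same boundary condition with only the sign of the field on one copy changed; after grouping terms, the expectation on the right is one of even $s$-degree and is therefore non-negative. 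For the second inequality I would take both copies under the same $\pm$ state and use the identity $\langle\sigma_i\sigma_j\rangle - \langle\sigma_i\rangle\langle\sigma_j\rangle = 2\langle s_is_j\rangle_{\mathrm{doubled}}$, then compare the two sides by interpolating the boundary field and differentiating: the derivative is an expectation of a degree-four monomial in $(s,t)$ with even $s$-degree, again non-negative by DVI, and integrating gives the inequality.

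The hard part I expect is aligning the mixed boundary condition case with the positivity lemma: naively spin-flipping the $-$ copy flips the field on that copy to be non-positive, which must then be absorbed into the positive field of the other copy before the $(s,t)$-expansion can be organised into coefficients of definite sign. This sign bookkeeping is the only content the authors suppress when they cite \cite{FP-I}; once it is carried out, both inequalities become immediate consequences of the two algebraic identities above and of Ginibre's monomial positivity lemma.
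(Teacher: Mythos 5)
Your general framework --- duplicated variables after rewriting the semi-infinite state as a $\Z^d$ Ising state with couplings $J^\lambda$ and with the $\pm$ boundary conditions realised as fields $h^{(1)}=\bm{h}+M\mathbbm{1}_{\partial\Lambda}$ and $h^{(2)}=\bm{h}-M\mathbbm{1}_{\partial\Lambda}$, which satisfy $h^{(1)}_i\geq|h^{(2)}_i|$ precisely because $\bm{h}\geq 0$ --- is the right one, and it is what the paper points to by citing \cite{FP-I} (the paper gives no proof of its own). But two points in your execution do not go through. First, you state Ginibre's lemma as ``monomials of even $s$-degree have non-negative expectation'' and then apply it to $\langle s_it_j+t_is_j\rangle$, which has $s$-degree one. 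The correct statement is that when every coefficient of the doubled Hamiltonian in the $(s,t)$-variables is non-negative --- the coefficient of $t_i$ is $h^{(1)}_i+h^{(2)}_i$ and that of $s_i$ is $h^{(1)}_i-h^{(2)}_i$, both $\geq 0$ under $h^{(1)}_i\geq|h^{(2)}_i|$ --- then \emph{all} monomials $\langle\prod_A s\prod_B t\rangle$ are non-negative, with no parity restriction; the odd-$s$ monomials are exactly the nontrivial ones here, since they vanish identically when the two fields coincide. With that correction your treatment of \eqref{increasing.correlations} is fine.

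Second, and more seriously, your argument for the truncated-correlation inequality fails. The derivative of $\langle s_is_j\rangle_{\mathrm{doubled}}$ in an interpolated boundary field is not a monomial expectation but the truncated quantity $\langle s_is_jt_k\rangle-\langle s_is_j\rangle\langle t_k\rangle$; where Ginibre's second inequality applies it gives this quantity the sign $\geq 0$, which is the \emph{wrong} direction for the monotonicity you need, and on the portion of the interpolation path where the boundary field is negative the hypotheses of the correlation inequalities fail altogether --- this is exactly why the $-$ boundary condition cannot be reached from the $+$ one by a GHS-type differentiation. The standard proof (the one behind the citation to \cite{FP-I}) instead duplicates the \emph{two different} systems against each other, copy $1$ with $h^{(1)}$ and copy $2$ with $h^{(2)}$, uses the identity
\[
\bigl(\langle\sigma_i\sigma_j\rangle^{(1)}-\langle\sigma_i\rangle^{(1)}\langle\sigma_j\rangle^{(1)}\bigr)-\bigl(\langle\sigma_i\sigma_j\rangle^{(2)}-\langle\sigma_i\rangle^{(2)}\langle\sigma_j\rangle^{(2)}\bigr)=2\bigl(\langle t_is_j\rangle-\langle t_i\rangle\langle s_j\rangle\bigr)+2\bigl(\langle s_it_j\rangle-\langle s_i\rangle\langle t_j\rangle\bigr),
\]
and invokes Lebowitz's anti-correlation inequality $\langle t_As_B\rangle\leq\langle t_A\rangle\langle s_B\rangle$, valid precisely when $h^{(1)}\geq|h^{(2)}|$. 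That inequality is a further consequence of Ginibre's lemma applied to a second duplication, and it is the ingredient missing from your proposal.
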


\section{Critical behaviour when \texorpdfstring{$\delta>1$}{d>1}}In the same steps as in \cite{FP-I}, we prove that the surface free energies \eqref{Def.F+} and \eqref{Def.F-} are well defined. Then, we follow \cite{FP-II} to prove the existence of a critical value $\overline{\lambda}_c$ such that, for $\lambda>\overline{\lambda}_c$ there is a unique state, and for $\lambda<\overline{\lambda}_c$, there is a phase transition.

\begin{proposition}
    For any $J\geq 0$ and any summable sequence of positive real numbers $\mathrm{h}=(\mathrm{h}_\ell)_{\ell\geq 1}$, The limits $ {F}^+(J,\bm{\mathrm{h}})$ and $ {F}^-(J,\bm{\mathrm{h}})$ are well defined. Here, $\bm{\rmh}$ is the external field induced by $\mathrm{h}$.
\end{proposition}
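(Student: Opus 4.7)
The plan is to follow the Fr{\"o}hlich--Pfister argument of \cite{FP-I}, adapted to the layered external field $\bm{\rmh}$. First, use a reflection identity: because the semi-infinite Hamiltonian contains no bonds crossing the line $\mathcal{L}$ (free boundary condition at the wall), writing $(\mathcal{Z}^{+,J}_{n;\lambda,\bm{\rmh}})^2$ as the product of the partition functions on $\Lambda_n$ and its reflection $\Lambda^\prime_n$, and gluing the configurations into a single configuration on $\Delta_n$, yields
\[
\bigl(\mathcal{Z}^{+,J}_{n;\lambda,\bm{\rmh}}\bigr)^2 \;=\; Q^{+,J^{\mathrm{cut}}}_{\Delta_n;\,\overline{\bm{\rmh}}_\lambda},
\]
where $J^{\mathrm{cut}}$ coincides with $J$ on every nearest-neighbor bond of $\Delta_n$ except on the $|\mathcal{W}_n|$ crossings of $\mathcal{L}$ (on which it vanishes), and $\overline{\bm{\rmh}}_\lambda \coloneqq \overline{\bm{\rmh}} + \lambda\bigl(\mathbbm{1}_{\mathcal{W}_n} + \mathbbm{1}_{\mathcal{W}^\prime_n}\bigr)$. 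Writing $A_n$ for the right-hand side and $B_n \coloneqq Q^{+,J}_{\Delta_n;0}$, the claim reduces to showing convergence of $\tfrac{1}{2|\mathcal{W}_n|}\ln(A_n/B_n)$.

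Next, decompose $\ln(A_n/B_n)$ by independently switching on the external field and the bonds crossing $\mathcal{L}$. Differentiating $\ln Q^{+,J^{\mathrm{cut}}}_{\Delta_n;\,s\overline{\bm{\rmh}}_\lambda}$ in $s\in[0,1]$ produces
\[
\ln A_n - \ln Q^{+,J^{\mathrm{cut}}}_{\Delta_n;0} \;=\; \int_0^1 \sum_{i\in\Delta_n} \bigl(\overline{\bm{\rmh}}_\lambda\bigr)_i \,\langle \sigma_i\rangle^{+,J^{\mathrm{cut}}}_{\Delta_n;\,s\overline{\bm{\rmh}}_\lambda}\, ds,
\]
while differentiating $\ln Q^{+,J^{(s)}}_{\Delta_n;0}$ in $s\in[0,J]$, with $J^{(s)}$ equal to $s$ on the $\mathcal{L}$-crossings and $J$ elsewhere, produces
\[
\ln B_n - \ln Q^{+,J^{\mathrm{cut}}}_{\Delta_n;0} \;=\; \int_0^J \sum_{\substack{\langle i,j\rangle \\ \text{across }\mathcal{L}}} \langle \sigma_i\sigma_j\rangle^{+,J^{(s)}}_{\Delta_n;0}\,ds.
\]
Subtracting these two identities expresses $\ln(A_n/B_n)$ explicitly in terms of magnetizations and short-range correlations in two reference ensembles.

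Finally, divide by $2|\mathcal{W}_n|$ and pass to the limit. The bond integral is the easy piece: it equals $\tfrac{1}{2}$ times the wall-averaged correlation $|\mathcal{W}_n|^{-1}\sum_j \langle \sigma_{(j,1)}\sigma_{(j,0)}\rangle^{+,J^{(s)}}_{\Delta_n;0}$, which converges by weak convergence of the $+$-state (a consequence of FKG) together with the horizontal translation invariance of its limit; bounded convergence in $s$ finishes this term. The $\lambda$-piece of the field integral is $\lambda$ times the wall-averaged magnetization and is handled identically. The $\overline{\bm{\rmh}}$-piece is the delicate part: the reflection symmetry of the cut state together with a slicing of $\Delta_n$ into horizontal layers collapses it to
\[
\sum_{\ell=1}^{n}\rmh_\ell\cdot\frac{1}{|\mathcal{W}_n|}\sum_{j\in[-n,n]^{d-1}}\langle\sigma_{(j,\ell)}\rangle^{+,J^{\mathrm{cut}}}_{\Delta_n;\,s\overline{\bm{\rmh}}_\lambda}.
\]
Each layer average is bounded by $1$ and converges as $n\to\infty$ by the same weak-convergence/translation-invariance argument (with the boundary contribution $O(n^{d-2})$ being negligible compared to $|\mathcal{W}_n|$); the summability of $(\rmh_\ell)$ then enables dominated convergence in both $\ell$ and $s$. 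This summability condition is precisely the main obstacle: without it the field contribution could outgrow $|\mathcal{W}_n|\sim n^{d-1}$ and no such limit would exist. The argument for $F^-$ is identical after replacing $+$ by $-$.
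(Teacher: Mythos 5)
Your proof is correct and follows essentially the same route as the paper: the reflection identity turning $(\mathcal{Z}^{+}_{n;\lambda,\bm{\rmh}})^2$ into a doubled-box Ising partition function with the crossing bonds removed and the reflected field $\overline{\bm{\rmh}}$, an interpolation expressing the log-ratio as integrals of wall-averaged correlations and magnetizations, and convergence via FKG monotonicity of the $+$-states plus summability of $(\rmh_\ell)$ for dominated convergence. The only (immaterial) difference is that you interpolate the bonds and the field separately through the cut-bond, zero-field reference ensemble, whereas the paper uses a single parameter $t$ multiplying both terms at once, which yields the same two integrals after differentiation.
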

\begin{proof}
    As the parameters $J$ and $\bm{\rmh}$ are fixed, we will omit them from the notation. Also, in the sums, we omit $|i- j|=1$, since this is always the case. With the definition of $\overline{\bm{\rmh}}$ in \eqref{Def: overline_h} we have
    \begin{align*}
        \left(\mathcal{Z}^+_{n; \bm{\rmh}}\right)^2 &= \sum_{\sigma\in \Omega^+_{\Delta_n}}\exp{\left\{ \sum_{i,j\in\Delta_n}J\sigma_i\sigma_j - \sum_{i\in\calW_n} J\sigma_i\sigma_{i-e_d} + \sum_{i\in \Delta_n}\overline{\rmh}_i\sigma_i + \sum_{\substack{i\in \Delta_n \\ j\notin \Delta_n}}J\sigma_i \right\}} \\ 
        & = \sum_{\sigma\in \Omega^+_{\Delta_n}}\exp{\left\{ \sum_{i,j\in\Delta_n}J\sigma_i\sigma_j + \sum_{\substack{i\in \Delta_n \\ j\notin \Delta_n}}J\sigma_i - \Tilde{H}_n(\sigma)\right\}}, 
    \end{align*}
    with $\Tilde{H}_n(\sigma) = \sum_{i\in\calW_n} J\sigma_i\sigma_{i-e_d} - \sum_{i\in\Delta_n}\overline{\rmh}_i\sigma_i$. Take 
    \begin{equation*}
         \Xi_n(t) \coloneqq  \sum_{\sigma\in \Omega^+_{\Delta_n}}\exp{\left\{ \sum_{i,j\in\Delta_n}J\sigma_i\sigma_j + \sum_{\substack{i\in \Delta_n \\ j\notin \Delta_n}}J\sigma_i - t\Tilde{H}_n(\sigma)\right\}},
    \end{equation*}
    and let $\langle\cdot \rangle^{+}_{\Delta_n}(t)$ be the state given by the Hamiltonian $H_n(t)(\sigma) = \sum_{i,j\in\Delta_n}J\sigma_i\sigma_j + \sum_{\substack{i\in \Delta_n \\ j\notin \Delta_n}}J\sigma_i - t\Tilde{H}_n(\sigma)$ with $+$-boundary condition. We can write 
    \begin{align}\label{Eq: F_as_integral}
        \ln{\left[\frac{\left(\mathcal{Z}_{n;\bm{\rmh}}^+\right)^2}{Q_{\Delta_n; J}^+}\right]} &= \ln{\left[\frac{\Xi_n(1)}{\Xi_n(0)}\right]} = \int_{0}^1 \frac{\d}{\d s}\left(\ln\Xi_n(s)\right)ds \nonumber \\
        &= - \int_0^1\sum_{i\in\calW_n} J \langle \sigma_i\sigma_{i-e_d} \rangle_{\Delta_n}^+(s) ds + \int_0^1 \sum_{i\in\Delta_n}\overline{\rmh_i}\langle\sigma_i \rangle_{\Delta_n}^+(s)ds.
    \end{align}
    Considering $\langle \cdot \rangle^+(t)\coloneqq\lim_{n\to\infty} \langle \cdot \rangle^+_{\Delta_n}(t)$ the limiting state, by standard arguments we can use that the plus states are decreasing in $\Lambda$, with the order on the inclusion, to show that 
       \begin{equation}\label{Eq: Lim_h_i_sigma_i}
        \lim_{n\to\infty} \frac{1}{|\calW_n|}\sum_{i\in\Delta_n}  \overline{\rmh}_i\langle \sigma_{i} \rangle_{\Delta_n}^+(s) = \sum_{\ell=1}^{\infty} \rmh_\ell \left(\langle \sigma_{\ell e_d} \rangle^+(s) + \langle \sigma_{-(\ell -1)e_d} \rangle^+(s)\right),
    \end{equation}

    and 

    \begin{equation}\label{Eq: Lim_correlation_on_wall}
        \lim_{n\to\infty} \frac{1}{|\calW_n|}\sum_{i\in\calW_n}  \langle \sigma_i\sigma_{i-e_d} \rangle_{\Delta_n}^+(s) = \langle \sigma_0\sigma_{-e_d} \rangle^+(s).
    \end{equation}
    As the monotonicity only holds for increasing functions, one can use the transformation $\eta_i = \frac{(\sigma_i + 1)}{2}$  for all $i\in\Z^d$, so $\eta_i\eta_j$ and $\eta_i$ are increasing. Equations \eqref{Eq: F_as_integral}, \eqref{Eq: Lim_correlation_on_wall} and \eqref{Eq: Lim_h_i_sigma_i}, together with the dominated convergence theorem, yields
\begin{align*}
    F^+(J,\bm{\rmh}) &= \lim_{n\to\infty}\frac{1}{|\calW_n|} \int_0^1\sum_{i\in\calW_n} J \langle \sigma_i\sigma_{i-e_d} \rangle_{\Delta_n}^+(s) ds -\frac{1}{|\calW_n|}  \int_0^1 \sum_{i\in\Delta_n}\overline{\rmh_i}\langle\sigma_i \rangle_{\Delta_n}^+(s)ds \\
    &= \int_0^1 J\langle \sigma_0\sigma_{-e_d} \rangle^+(s) ds - \sum_{\ell=1}^{\infty} \int_0^1  \rmh_\ell \left(\langle \sigma_{\ell e_d} \rangle^+(s) + \langle \sigma_{-(\ell -1)e_d} \rangle^+(s)\right) ds,
\end{align*}
so ${F}^+(J,\bm{\rmh})$ is well defined. The proof that the limit ${F}^-(J,\bm{\rmh})$ exists is analogous. 
 
\end{proof} 
 
To characterize the phase transition, we proceed as in \cite{FP-II} and use the {wall free energy}, defined in \eqref{Def: Wall free energy}. Notice that, when we do not have an external field, $Q_{\Delta_n}^+ = Q_{\Delta_n}^-$. This simplifies the wall free energy to 
\begin{equation}\label{tau_tilde_w_wo_ising_partition}
    {\tau}_w(J,\bm{\rmh}) = \lim_{n \to \infty} -\frac{1}{|\mathcal{W}_n|}\ln\left[ \frac{\mathcal{Z}^{-}_{n; \bm{\rmh}}}{\mathcal{Z}^{+}_{n; \bm{\rmh}}} \right].
\end{equation}
First, we prove that, for the external field $\widehat{\bm{\lambda}}$, we can write $ {\tau}_w(J,\widehat{\bm{\lambda}})$ in terms of differences of the magnetization. 

\begin{proposition}\label{Prop: Tilde_tau_as_integral}
For $J>0$ and $\lambda\geq0$, the wall free energy can be written as 
\begin{equation}
     {\tau}_w(J,\widehat{\bm{\lambda}}) = \int_0^\lambda \sum_{\ell=1}^\infty \frac{1}{\ell^{\delta}}\left( \langle \sigma_{\ell e_d} \rangle^+_{J, \widehat{\bm{s}}} - \langle \sigma_{\ell e_d} \rangle^-_{J, \widehat{\bm{s}}} \right) ds.
\end{equation}
\end{proposition}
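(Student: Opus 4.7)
The plan is to prove the identity by a fundamental theorem of calculus argument in the parameter $s$ that multiplies the decaying profile $\{1/i_d^\delta\}_{i_d\geq 1}$, performed at finite volume and then passed to the thermodynamic limit. The natural starting point is \eqref{tau_tilde_w_wo_ising_partition}, where the Ising partition functions have already cancelled, so that $\tau_w(J,\widehat{\bm{\lambda}})=\lim_n g_n(\lambda)$ with
\begin{equation*}
  g_n(s)\coloneqq -\frac{1}{|\calW_n|}\ln\frac{\mathcal{Z}^{-}_{n;\widehat{\bm{s}}}}{\mathcal{Z}^{+}_{n;\widehat{\bm{s}}}}.
\end{equation*}
Spin-flip symmetry at $s=0$ gives $g_n(0)=0$, and since $g_n$ is analytic in $s$ one has $g_n(\lambda)=\int_0^\lambda g_n'(s)\,ds$, where differentiating $\ln\mathcal{Z}^{\pm}_{n;\widehat{\bm{s}}}$ (and using $\partial_s \widehat{s}_i=1/i_d^\delta$) yields
\begin{equation*}
  g_n'(s)=\frac{1}{|\calW_n|}\sum_{i\in\Lambda_n}\frac{1}{i_d^\delta}\Bigl(\langle\sigma_i\rangle^{+}_{\Lambda_n;\widehat{\bm{s}}}-\langle\sigma_i\rangle^{-}_{\Lambda_n;\widehat{\bm{s}}}\Bigr).
\end{equation*}

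Next I would split the sum over $i\in\Lambda_n$ by the depth $\ell=i_d\in\{1,\dots,n\}$, so each layer contributes a Cesaro average over $j\in\calW_n$ of the magnetization difference $\langle\sigma_{j+(\ell-1)e_d}\rangle^{+}_{\Lambda_n;\widehat{\bm{s}}}-\langle\sigma_{j+(\ell-1)e_d}\rangle^{-}_{\Lambda_n;\widehat{\bm{s}}}$ weighted by $1/\ell^\delta$. Because the interaction and the field $\widehat{\bm{s}}$ are invariant under translations of $\mathbb{Z}^{d-1}\times\{0\}$, the limiting states $\langle\cdot\rangle^{\pm}_{J,\widehat{\bm{s}}}$ obtained by monotone convergence from the finite-volume $\pm$ states are translation invariant in the first $d-1$ coordinates, so for each fixed $\ell$,
\begin{equation*}
  \frac{1}{|\calW_n|}\sum_{j\in\calW_n}\langle\sigma_{j+(\ell-1)e_d}\rangle^{\pm}_{\Lambda_n;\widehat{\bm{s}}}\xrightarrow[n\to\infty]{}\langle\sigma_{\ell e_d}\rangle^{\pm}_{J,\widehat{\bm{s}}}.
\end{equation*}
Each term in $g_n'(s)$ is uniformly bounded by $2/\ell^\delta$, which is summable when $\delta>1$, so dominated convergence allows the interchange of $\lim_n$, $\int_0^\lambda ds$, and $\sum_{\ell\geq 1}$, producing the stated formula. (For $\delta\leq 1$ the uniqueness result of the next section forces $\langle\sigma_{\ell e_d}\rangle^{+}_{J,\widehat{\bm{s}}}=\langle\sigma_{\ell e_d}\rangle^{-}_{J,\widehat{\bm{s}}}$, so the identity holds trivially.)

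The main obstacle is controlling the layer-wise Cesaro averages as $n\to\infty$: at finite volume the expectations $\langle\sigma_{j+(\ell-1)e_d}\rangle^{\pm}_{\Lambda_n;\widehat{\bm{s}}}$ genuinely depend on $j$ and only become translation invariant in the limit. One handles this by sandwiching: sites $j$ lying deep inside $\calW_n$ can be compared to plus and minus expectations on translated boxes and therefore enjoy uniform monotone convergence to the translation-invariant limit, while sites near the boundary of $\calW_n$ form a set of vanishing proportion $O(1/n)$ and are absorbed in the trivial $2$-bound on each spin expectation. Together with the $2/\ell^\delta$ domination needed for the $\ell$-sum, this is the only genuinely analytic point; the remaining steps are the algebraic bookkeeping outlined above.
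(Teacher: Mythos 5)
Your proposal is correct and follows essentially the same route as the paper: starting from the simplification \eqref{tau_tilde_w_wo_ising_partition}, using spin-flip symmetry at $s=0$ and the fundamental theorem of calculus in the field strength, differentiating the log-partition functions to get the layer-weighted magnetization differences, and then passing to the limit via translation invariance of the limiting states, monotonicity in the volume, and dominated convergence (with the $2/\ell^\delta$ bound, summable since $\delta>1$ is in force in this section). Your extra care about the boundary layers of the Ces\`aro average and the aside on $\delta\le 1$ are fine but not needed beyond what the paper's ``standard arguments'' already cover.
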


\begin{proof}
    With the simplification  \eqref{tau_tilde_w_wo_ising_partition} in mind, we differentiate each term w.r.t. $\lambda$ to get
\begin{align*}
    -\partial_\lambda \left( \ln\left[ \frac{\mathcal{Z}_{n;\lambdadec}^{-}}{\mathcal{Z}_{n;\lambdadec}^{+}} \right] \right) 
        &= \frac{1}{\mathcal{Z}_{n;\lambdadec}^{+}} \partial_\lambda \left( \mathcal{Z}_{n;\lambdadec}^{+} \right) - \frac{1}{\mathcal{Z}_{n;\lambda}^{-}} \partial_\lambda \left( \mathcal{Z}_{n;\lambdadec}^{-} \right).
\end{align*}
As 
\begin{equation*}
    \partial_\lambda\left(\mathcal{Z}_{n;\lambdadec}^{+}\right)=\sum_{i\in \Lambda_n}\sum_{\sigma\in\Sigma_{\Lambda_n}^{+}} \frac{1}{i_d^\delta}\sigma_i e^{-\mathcal{H}_{\Lambda_n; \lambdadec}^J(\sigma)} = \sum_{i\in \calW_n^\prime}\sum_{\ell=1}^{n} \frac{1}{\ell^\delta}\sigma_{i+\ell e_d} e^{-\mathcal{H}_{\Lambda_n; \lambdadec}^J(\sigma)} ,
\end{equation*}
we conclude that
\begin{equation}\label{d.lambda.of.ln[Z+/Z-].decaying.field}
     -\partial_\lambda \left( \ln\left[ \frac{\mathcal{Z}_{n;\lambda}^{-}}{\mathcal{Z}_{n;\lambda}^{+}} \right] \right) = \sum_{i\in \mathcal{W}_n^\prime} \sum_{\ell=1}^{n} \frac{1}{\ell^\delta}\left(\langle \sigma_{i+\ell e_d} \rangle^{+}_{n;\lambdadec} - \langle \sigma_{i+\ell e_d} \rangle^{-}_{n;\lambdadec}\right).
\end{equation}
All of the above functions are continuous and bounded since they are the logarithm of positive polynomials. Moreover, $\mathcal{Z}^{+}_{n; 0} = \mathcal{Z}^{-}_{n; 0}$. Hence we can write
\begin{equation*}
     {\tau}_w(J,\lambdadec) = \lim_{n\to \infty} \frac{1}{|\mathcal{W}_n|}\sum_{i\in \mathcal{W}_n^\prime}\sum_{\ell=1}^{n}\int_0^\lambda \frac{1}{\ell^\delta}\left(\langle \sigma_{i+\ell e_d} \rangle^{+}_{n;\widehat{\bm{s}}} - \langle \sigma_{i+\ell e_d} \rangle^{-}_{n;\widehat{\bm{s}}}\right)ds.
\end{equation*}
The result follows from the dominated convergence theorem once we note that, 
\begin{equation*}
    \lim_{n\to \infty} \frac{1}{|\mathcal{W}_n|}\sum_{i\in \mathcal{W}_n^\prime}\sum_{\ell=1}^{n} \frac{1}{\ell^\delta}\langle \sigma_{i+\ell e_d} \rangle^{+}_{n;\widehat{\bm{s}}} = \sum_{\ell=1}^\infty  \frac{1}{\ell^\delta}\langle \sigma_{\ell e_d} \rangle^{+}_{\widehat{\bm{s}}} 
\end{equation*}
and 
\begin{equation*}
    \lim_{n\to \infty} \frac{1}{|\mathcal{W}_n|}\sum_{i\in \mathcal{W}_n^\prime}\sum_{\ell=1}^{n} \frac{1}{\ell^\delta}\langle \sigma_{i+\ell e_d} \rangle^{-}_{n;\widehat{\bm{s}}}= \sum_{\ell=1}^\infty  \frac{1}{\ell^\delta}\langle \sigma_{\ell e_d} \rangle^{-}_{\widehat{\bm{s}}}. 
\end{equation*}
Again, the proof of these limits follows by standard arguments using the monotonicity of the states with respect to $\Lambda$. 
\end{proof}

This new wall-free energy also presents the monotonicity and convexity properties of the previous one. Such properties are described in the next proposition. 

\begin{proposition}\label{Prop: Monotonicity_and_conv_of_tilde_tau}
    For $J>0$, and an external field $\bm{\rmh}$ induced by a positive, summable sequence $\rmh = (\rmh_\ell)_{\ell=1}^\infty$, and $\lambda>0$, we have
    \begin{itemize}
        \item[(a)] ${\tau}_w(J,\bm{\rmh})$ is non-decreasing in $J$ and $\rmh_\ell$, for all $\ell\geq 1$; 
        \item[(b)]${\tau}_w(J, \lambdadec)$ is a concave function of $\lambda >0$. 
    \end{itemize}
\end{proposition}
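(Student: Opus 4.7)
The plan is to establish both properties at finite volume and then pass to the limit. Define
\[
\tau_w^{(n)}(J,\bm{\rmh}) \coloneqq -\frac{1}{|\calW_n|}\ln\frac{\mathcal{Z}^{-,J}_{n;\bm{\rmh}}}{\mathcal{Z}^{+,J}_{n;\bm{\rmh}}},
\]
so that $\tau_w = \lim_n \tau_w^{(n)}$ by \eqref{tau_tilde_w_wo_ising_partition} (the Ising partition functions $Q^\pm_{\Delta_n;0}$ already coincide at fixed $n$ by spin-flip symmetry, so they contribute nothing). Pointwise limits of non-decreasing (resp. concave) functions are non-decreasing (resp. concave), so it is enough to prove the stated properties for each $\tau_w^{(n)}$ and let $n\to\infty$; no interchange of limits and derivatives is needed.

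For (a), a direct calculation gives
\[
\partial_J\tau_w^{(n)}(J,\bm{\rmh}) = \frac{1}{|\calW_n|}\sum_{\substack{i\sim j,\;\{i,j\}\cap\Lambda_n\neq\emptyset\\ i,j\in\H+}}\left(\langle\sigma_i\sigma_j\rangle^{+}_{n;\bm{\rmh}} - \langle\sigma_i\sigma_j\rangle^{-}_{n;\bm{\rmh}}\right)\geq 0,
\]
the inequality being \eqref{increasing.correlations} of Proposition \ref{Consequence.of.DVI}. The analogous computation in $\rmh_\ell$ produces $|\calW_n|^{-1}\sum_{i\in\Lambda_n,\,i_d=\ell}\bigl(\langle\sigma_i\rangle^{+}_{n;\bm{\rmh}}-\langle\sigma_i\rangle^{-}_{n;\bm{\rmh}}\bigr)\geq 0$; the positivity here is the standard single-spin comparison obtained from the same duplicated-variables scheme underpinning Proposition \ref{Consequence.of.DVI} and already invoked implicitly in Proposition \ref{basta.comparar.magnetizacoes}.

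For (b), the first $\lambda$-derivative of $\tau_w^{(n)}$ is furnished by formula \eqref{d.lambda.of.ln[Z+/Z-].decaying.field}. Differentiating it once more in $\lambda$ and using
\[
\partial_\lambda\langle\sigma_k\rangle^{\pm}_{n;\lambdadec} = \sum_{j\in\Lambda_n}\frac{1}{j_d^\delta}\bigl(\langle\sigma_k\sigma_j\rangle^{\pm}_{n;\lambdadec}-\langle\sigma_k\rangle^{\pm}_{n;\lambdadec}\langle\sigma_j\rangle^{\pm}_{n;\lambdadec}\bigr),
\]
one obtains $\partial_\lambda^2\tau_w^{(n)}$ as a positive linear combination (with weights $(\ell^\delta j_d^\delta)^{-1}$) of differences of truncated two-point functions between the $+$ and $-$ states. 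Each such difference is $\leq 0$ by the second inequality of Proposition \ref{Consequence.of.DVI}, so $\partial_\lambda^2\tau_w^{(n)}\leq 0$ and $\tau_w^{(n)}$ is concave in $\lambda$; passing to the limit yields the claim.

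The main obstacle is essentially bookkeeping. In (b) one must correctly expand the chain-rule sum $\partial_\lambda\langle\sigma_k\rangle^\pm$ so that the differences appearing in $\partial_\lambda^2\tau_w^{(n)}$ match the pattern of Proposition \ref{Consequence.of.DVI}; in (a) one must recognize that the comparison $\langle\sigma_i\rangle^{+}\geq\langle\sigma_i\rangle^{-}$, while not literally \eqref{increasing.correlations}, is a sibling statement provided by the same duplicated-variables machinery. Everything else follows from preservation of monotonicity and concavity under pointwise limits.
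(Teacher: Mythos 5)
Your proposal is correct and follows essentially the same route as the paper: reduce to the finite-volume quantity via \eqref{tau_tilde_w_wo_ising_partition}, show monotonicity of each finite-volume term by differentiating in $J$ and $\rmh_\ell$ (the paper invokes FKG for the single-spin comparison where you invoke the duplicated-variables scheme, but both are standard and valid), show concavity by a second $\lambda$-derivative of \eqref{d.lambda.of.ln[Z+/Z-].decaying.field} controlled by the truncated-correlation inequality of Proposition \ref{Consequence.of.DVI}, and pass to the pointwise limit.
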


\begin{proof}
    Item \textit{(a)} follows from the representation \eqref{tau_tilde_w_wo_ising_partition} after we differentiate the limit term with respect to the appropriate variable. Differentiating the term in the limit with respect to $J$ we get
\begin{equation*}
    -\partial_J\left(\frac{1}{|\mathcal{W}_n|}\ln\left[ \frac{\mathcal{Z}^{-}_{n;\bm{\rmh}}}{\mathcal{Z}^{+}_{n;\bm{\rmh}}} \right] \right) = |\mathcal{W}_n|^{-1} \sum_{\substack{i\sim j \\ \{ i,j \} \cap \Lambda_n \neq \emptyset}}   \langle \sigma_i\sigma_j \rangle_{n; \bm{\rmh}}^+ -  \langle \sigma_i\sigma_j \rangle_{n; \bm{\rmh}}^-,
\end{equation*}
that is positive by Proposition \ref{Consequence.of.DVI}, a consequence of the duplicate variables inequalities. Differentiating the same term we respect to $\rmh_\ell$ for a fixed $\ell\geq 1$ we have
\begin{equation*}
    -\partial_{\rmh_\ell}\left(\frac{1}{|\mathcal{W}_n|}\ln\left[ \frac{\mathcal{Z}^{-}_{n; \bm{\rmh}}}{\mathcal{Z}^{+}_{n; \bm{\rmh}}} \right] \right) = |\mathcal{W}_n|^{-1} \sum_{i\in\calW_n^\prime}   \langle \sigma_{i+\ell e_d} \rangle_{n; \bm{\rmh}}^+ -  \langle \sigma_{i+ \ell e_d} \rangle_{n; \bm{\rmh}}^-,
\end{equation*}
that is positive by FKG. 
To prove claim \textit{(b)}, we use a similar reasoning. By equation \eqref{d.lambda.of.ln[Z+/Z-].decaying.field}, we have 
\begin{multline*}
    -\partial^2_\lambda \left( \ln\left[ \frac{\mathcal{Z}_{n;\lambdadec}^{-}}{\mathcal{Z}_{n;\lambdadec}^{+}} \right] \right) = \sum_{i\in \mathcal{W}_n^\prime} \sum_{\ell=1}^{n} \frac{1}{\ell^\delta}\partial_{\lambda}\left(\langle \sigma_{i+\ell e_d} \rangle^{+}_{n;\lambdadec} - \langle \sigma_{i+\ell e_d} \rangle^{-}_{n;\lambdadec}\right) \\
    = \sum_{i,j\in \mathcal{W}_n^\prime} \sum_{\ell,k=1}^{n} \frac{1}{\ell ^\delta}\frac{1}{k^\delta}\left(\langle \sigma_{i+\ell e_d} \sigma_{j+k e_d}\rangle^{+}_{n;\lambdadec} - \langle \sigma_{i+\ell e_d} \rangle^{+}_{n;\lambdadec}\langle \sigma_{j+k e_d}\rangle^{+}_{n;\lambdadec}   - \langle \sigma_{i+\ell e_d} \rangle^{-}_{n;\lambdadec} + \langle \sigma_{i+\ell e_d} \rangle^{-}_{n;\lambdadec}\langle \sigma_{j+k e_d}\rangle^{-}_{n;\lambdadec} \right),
\end{multline*}
that is smaller or equal to zero by Proposition \ref{Consequence.of.DVI}. So ${\tau}_w$ is the limit of concave functions, and therefore it is concave.
\end{proof}

To relate the wall-free energy and the phase transition or uniqueness, we introduce the critical quantity
\begin{equation*}
    \overline{\lambda}_{c}(J)\coloneqq \inf\{\lambda \geq 0: {\tau}_w(J,\lambdadec) = \max_{s\geq 0}{\tau}_w(J,\widehat{\bm{s}})\}.
\end{equation*}
Using Proposition \ref{Prop: Monotonicity_and_conv_of_tilde_tau}, in the next Lemma we show that the wall free energy reaches a maximum and therefore $\overline{\lambda}_c$ is finite. We do so by comparing it to    
\begin{equation}\label{lambda_c.by.states}
    \lambda_c \coloneqq \inf \{\lambda\geq 0 : \langle \sigma_0 \rangle^{+}_{\lambda,0} = \langle \sigma_0 \rangle^{-}_{\lambda,0} \},
\end{equation}
the critical value of the semi-infinite model with no external field. In \cite{FP-II}, it is shown that, $\tau_w(J, \lambda, 0) = \tau(J)$ for all $\lambda>\lambda_c$, where $\tau(J)$ is the \textit{interface free energy for the Ising model}, defined as
\begin{equation}\label{tau(J)}
    \tau(J) = \lim_{n,m\to\infty} -\frac{1}{|\mathcal{W}_n|} \ln\left[ \frac{Q^{\mp, J}_{\Delta_{m,n}; 0}}{Q^{+, J}_{\Delta_{m,n}; 0}} \right].
\end{equation}
The $\mp$-boundary condition denotes the configuration $\sigma\in\Omega$ defined by 
\begin{equation*}
    \sigma_i=\begin{cases}
                -1, & \text{ if } i\in \mathbb{H}^d_+, \\
                +1 & \text{ if } i\in \Z^d\setminus \mathbb{H}_+^d
                \end{cases}
\end{equation*}
and 
\begin{equation*}
    \Delta_{m,n} = \left[-m,m\right]^{d-1}\times\left[-n,n\right].
\end{equation*}
    
\begin{lemma}\label{Lemma: Comparing_taus}
    For $J>0$, $\overline{\lambda}_c$ is finite and $\overline{\lambda}_c\leq\lambda_c$. Moreover, $\overline{\lambda}_c>0$ whenever  $J>J_c$. 
\end{lemma}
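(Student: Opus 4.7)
The plan is to exploit the integral representation
\[
\tau_w(J,\widehat{\bm{\lambda}})=\int_0^\lambda D(s)\,ds,\qquad D(s)\coloneqq\sum_{\ell\geq 1}\ell^{-\delta}\bigl(\langle\sigma_{\ell e_d}\rangle^+_{J,\widehat{\bm{s}}}-\langle\sigma_{\ell e_d}\rangle^-_{J,\widehat{\bm{s}}}\bigr)
\]
from Proposition~\ref{Prop: Tilde_tau_as_integral}. Both assertions will be reduced to understanding when $D$ vanishes and when $\sup_{s\geq 0}\tau_w(J,\widehat{\bm{s}})$ is strictly positive.

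For $\overline{\lambda}_c\leq\lambda_c$ (which also yields finiteness since $\lambda_c<\infty$), I would show $D(s)=0$ for every $s>\lambda_c$; this makes $\tau_w(J,\widehat{\bm{\lambda}})$ constant on $[\lambda_c,\infty)$, so the supremum is attained at $\lambda_c$. Vanishing of $D(s)$ is equivalent to uniqueness of the semi-infinite state at wall parameter $s$ and external field $\widehat{\bm{s}}$, which by Proposition~\ref{basta.comparar.magnetizacoes} reduces to $\langle\sigma_i\rangle^+=\langle\sigma_i\rangle^-$ at a single site. At $\bm{h}=\bm 0$ and $s>\lambda_c$ this holds by definition \eqref{lambda_c.by.states}. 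To transfer the equality to $\bm{h}=\widehat{\bm{s}}\geq\bm 0$, I would differentiate the finite-volume magnetizations in each $h_j$ and apply Proposition~\ref{Consequence.of.DVI} to get
\[
\partial_{h_j}\bigl(\langle\sigma_i\rangle^+_{\Lambda}-\langle\sigma_i\rangle^-_{\Lambda}\bigr)\leq 0.
\]
Integrating this inequality along a path from $\bm 0$ to $\widehat{\bm{s}}$, passing to the limit $\Lambda\nearrow\H+$, and combining with the FKG bound $\langle\sigma_i\rangle^+\geq\langle\sigma_i\rangle^-$ forces the difference to vanish at $\bm{h}=\widehat{\bm{s}}$.

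For $\overline{\lambda}_c>0$ when $J>J_c$, I would lower bound the supremum of $\tau_w$. By Proposition~\ref{Prop: Monotonicity_and_conv_of_tilde_tau}(a), adding positive components to the external field increases $\tau_w$, hence $\tau_w(J,\widehat{\bm{\lambda_c}})\geq\tau_w(J,\lambda_c,0)$. Combining the identity $\tau_w(J,\lambda,0)=\tau(J)$ for $\lambda>\lambda_c$ from \cite{FP-II} with continuity of $\tau_w(J,\cdot,0)$ at $\lambda_c$ (a consequence of its concavity in $\lambda$), the right-hand side equals $\tau(J)$, which is strictly positive for $J>J_c$. Since $\delta>1$ throughout this section, the integrand $D$ is uniformly bounded by $2\zeta(\delta)$, so $\tau_w(J,\widehat{\bm{\lambda}})\leq 2\zeta(\delta)\,\lambda$. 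Consequently $\tau_w(J,\widehat{\bm{\lambda}})<\tau(J)\leq\sup_{s\geq 0}\tau_w(J,\widehat{\bm{s}})$ for every $\lambda<\tau(J)/(2\zeta(\delta))$, yielding $\overline{\lambda}_c\geq\tau(J)/(2\zeta(\delta))>0$.

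The main obstacle is the propagation of uniqueness from $\bm{h}=\bm 0$ to $\bm{h}=\widehat{\bm{s}}\geq\bm 0$ at $s>\lambda_c$: the derivative bound via Proposition~\ref{Consequence.of.DVI} is clean at finite volume, but one must justify interchanging the thermodynamic limit with the integrated inequality, and then apply Proposition~\ref{basta.comparar.magnetizacoes} via the $\mathbb{Z}^d$ reformulation using the extended interaction $J^\lambda$ of \eqref{J.for.the.semi.infinite}. The remaining pieces (monotonicity, concavity and boundedness of $\tau_w$) are routine consequences of the machinery already assembled in Section~2 and the earlier part of Section~3.
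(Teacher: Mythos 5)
Your proposal is correct and follows essentially the same route as the paper: the integral representation of $\tau_w$, the monotonicity of $\langle\sigma_i\rangle^+_{n;\bm{h}}-\langle\sigma_i\rangle^-_{n;\bm{h}}$ in each $h_j$ from Proposition~\ref{Consequence.of.DVI} to kill the integrand above $\lambda_c$, and the comparison $\tau_w(J,\lambda_c,0)=\tau(J)>0\leq\tau_w(J,\widehat{\bm{\lambda_c}})$ via Lebowitz--Pfister for positivity. Your explicit bound $\tau_w(J,\lambdadec)\leq 2\zeta(\delta)\lambda$, which forces $\overline{\lambda}_c\geq\tau(J)/(2\zeta(\delta))>0$, is a detail the paper leaves implicit but is a welcome clarification rather than a different argument.
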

\begin{proof}
  To prove that $\overline{\lambda}_c\leq\lambda_c$, it is enough to show that for $J>0$ and $\lambda\geq \lambda_c$, ${\tau}_w(J,\lambdadec) = {\tau}_w(J,\widehat{\bm{\lambda_c}})$.

     Indeed, for all $i,j\in \H+$, $n\in\mathbb{N}$ and positive external field $\bm{h}$, $\langle \sigma_i\rangle_{n;\bm{h}}^+ - \langle \sigma_i\rangle_{n;\bm{h}}^-$ is decreasing in $h_j$, since 
    \begin{equation}\label{Eq: Diff_mag_is_decreasing}
       \partial_{h_j}(\langle \sigma_i\rangle_{n;\bm{h}}^+ - \langle \sigma_i\rangle_{n;\bm{h}}^-) = \langle \sigma_i\sigma_j\rangle_{n;\bm{h}}^+  -\langle \sigma_i\rangle_{n;\bm{h}}^+\langle \sigma_j\rangle_{n;\bm{h}}^+ - \langle \sigma_i\sigma_j\rangle_{n;\bm{h}}^- + \langle \sigma_i\rangle_{n;\bm{h}}^-\langle \sigma_j\rangle_{n;\bm{h}}^- \leq 0
    \end{equation}
    by Proposition \ref{Consequence.of.DVI}. In particular, for any $\lambda\geq 0$, $\langle \sigma_i\rangle_{n;\lambdadec}^+ - \langle \sigma_i\rangle_{n;\lambdadec}^- \leq \langle \sigma_i\rangle_{n;\lambda, 0}^+ - \langle \sigma_i\rangle_{n;\lambda, 0}^-$, and the same inequality holds for the limit states. As $\langle \sigma_i\rangle_{\lambda, 0}^+ - \langle \sigma_i\rangle_{\lambda, 0}^- = 0$ for all $\lambda > \lambda_c$ and $i\in\H+$, using Proposition \ref{Prop: Tilde_tau_as_integral} we conclude that 
    \begin{equation*}
        {\tau}_w(J,\lambdadec) = \int_0^\lambda \sum_{\ell=1}^\infty \frac{1}{\ell^{\delta}}\left( \langle \sigma_{\ell e_d} \rangle^+_{J, \widehat{\bm{s}}} - \langle \sigma_{\ell e_d} \rangle^-_{J, \widehat{\bm{s}}} \right) ds = \int_0^{\lambda_c} \sum_{\ell=1}^\infty \frac{1}{\ell^{\delta}}\left( \langle \sigma_{\ell e_d} \rangle^+_{J, \widehat{\bm{s}}} - \langle \sigma_{\ell e_d} \rangle^-_{J, \widehat{\bm{s}}} \right) ds = {\tau}_w(J,\widehat{\bm{\lambda_c}}).
    \end{equation*}
  By the monotonicity on the external field, given $\lambda\geq 0$ and taking $\bm{\lambda}_0 \coloneqq \{\lambda \mathbbm{1}_{\{i\in\calW\}}\}$, the external field that is zero outside of $\calW$, we have
     \begin{equation}\label{Eq: tau_smaller_than_tilde_tau}
       \tau_w(J,\lambda) = {\tau}_w(J,\bm{\lambda}_0)\leq {\tau}_w(J,\lambdadec)
    \end{equation}   
     For $J> J_c$, it was shown in \cite{Lebowitz_Pfister_81} that  $\tau(J)>0$. As $\tau(J)=\tau_w(J,\lambda)$ for all $\lambda\geq\lambda_c$, inequality \eqref{Eq: tau_smaller_than_tilde_tau} yields
\begin{equation*}
    0<\tau_w(J,\lambda_c)\leq {\tau}_w(J,\widehat{\bm{\lambda_c}}),
\end{equation*}
and therefore $\overline{\lambda}_c>0$ whenever $J>J_c$.
\end{proof}

We end this section by proving that $\overline{\lambda}_c$ is the critical value for phase transition. 
\begin{proposition}
    For any $0\leq \lambda<\overline{\lambda}_c$, $\langle \cdot \rangle_{J,\lambdadec}^+ \neq  \langle \cdot \rangle_{J,\lambdadec}^-$. And for $\lambda>\overline{\lambda}_c$,  $\langle \cdot \rangle_{J,\lambdadec}^+ =  \langle \cdot \rangle_{J,\lambdadec}^-$.
\end{proposition}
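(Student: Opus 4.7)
The plan is to drive everything off the integral representation of $\tau_w$ given by Proposition \ref{Prop: Tilde_tau_as_integral}. Write
\[
g(s) \coloneqq \sum_{\ell=1}^{\infty} \frac{1}{\ell^{\delta}} \bigl(\langle \sigma_{\ell e_d} \rangle^{+}_{J, \widehat{\bm{s}}} - \langle \sigma_{\ell e_d} \rangle^{-}_{J, \widehat{\bm{s}}} \bigr),
\]
so that $\tau_w(J, \lambdadec) = \int_{0}^{\lambda} g(s)\,ds$. Two properties of $g$ are crucial. First, $g(s) \geq 0$ since the $+$-state dominates the $-$-state. Second, $g$ is non-increasing in $s$: raising $s$ raises every component of the field $\widehat{\bm{s}}$, and Eq.\ \eqref{Eq: Diff_mag_is_decreasing} says each finite-volume difference $\langle \sigma_{i} \rangle^{+}_{n; \bm{h}} - \langle \sigma_{i} \rangle^{-}_{n; \bm{h}}$ is non-increasing in every $h_{j}$, a monotonicity preserved in the infinite-volume limit.

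For the uniqueness direction ($\lambda > \overline{\lambda}_c$), first I would observe that $\tau_w(J, \widehat{\bm{s}})$ is non-decreasing in $s$ (from $g \geq 0$, or Proposition \ref{Prop: Monotonicity_and_conv_of_tilde_tau}(a)) and attains its supremum at the finite value $\overline{\lambda}_c$ (Lemma \ref{Lemma: Comparing_taus}), so it stays at that supremum throughout $[\overline{\lambda}_c, \infty)$. Consequently $\int_{\overline{\lambda}_c}^{\lambda} g(s)\,ds = 0$, forcing $g = 0$ almost everywhere on $[\overline{\lambda}_c, \lambda]$. Combined with $g \geq 0$ and non-increasing, this upgrades to $g(\lambda) = 0$ pointwise: if $g(\lambda) > 0$, monotonicity would yield $g(s) \geq g(\lambda) > 0$ throughout $[\overline{\lambda}_c, \lambda]$, contradicting the a.e.\ vanishing. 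Each summand in $g(\lambda)$ is non-negative, so each vanishes: $\langle \sigma_{\ell e_d} \rangle^{+}_{J, \lambdadec} = \langle \sigma_{\ell e_d} \rangle^{-}_{J, \lambdadec}$ for every $\ell \geq 1$. Proposition \ref{basta.comparar.magnetizacoes}, applied through the embedding \eqref{J.for.the.semi.infinite} of the semi-infinite model as an Ising model on $\mathbb{Z}^{d}$, then yields $\langle \cdot \rangle^{+}_{J, \lambdadec} = \langle \cdot \rangle^{-}_{J, \lambdadec}$.

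For the phase transition direction ($\lambda < \overline{\lambda}_c$), the plan is a short contradiction. Assume $\langle \cdot \rangle^{+}_{J, \lambdadec} = \langle \cdot \rangle^{-}_{J, \lambdadec}$; then $\langle \sigma_{\ell e_d}\rangle^{+} = \langle \sigma_{\ell e_d}\rangle^{-}$ for every $\ell \geq 1$, so $g(\lambda) = 0$. By monotonicity and non-negativity of $g$, $g(s) = 0$ for every $s \geq \lambda$, hence $\tau_w(J, \widehat{\bm{s}})$ is constant on $[\lambda, \infty)$. This places $\lambda$ in the set whose infimum defines $\overline{\lambda}_c$, so $\lambda \geq \overline{\lambda}_c$, contradicting $\lambda < \overline{\lambda}_c$.

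The main obstacle is really bookkeeping: transferring the finite-volume DVI monotonicity Eq.\ \eqref{Eq: Diff_mag_is_decreasing} to the infinite-volume limit, and ensuring Proposition \ref{basta.comparar.magnetizacoes} applies to the semi-infinite states via the $\mathbb{Z}^d$ embedding. Once these standard passages are granted, the argument collapses to the real-analytic fact that a non-negative, non-increasing integrand whose integral is constant on an interval vanishes throughout that interval, combined with the fact that equality of magnetizations at a single site triggers global uniqueness.
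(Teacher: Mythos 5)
Your argument is correct, and in the phase-transition direction ($\lambda<\overline{\lambda}_c$) it coincides with the paper's: assume uniqueness, use \eqref{Eq: Diff_mag_is_decreasing} to make the integrand of Proposition \ref{Prop: Tilde_tau_as_integral} vanish on $[\lambda,\infty)$, conclude $\tau_w(J,\lambdadec)$ already equals the maximum, and contradict the definition of $\overline{\lambda}_c$. Where you genuinely diverge is the uniqueness direction. The paper observes that $\tau_w$ is constant, hence differentiable with vanishing derivative, on $(\overline{\lambda}_c,\infty)$, that it is the pointwise limit of the concave functions $|\mathcal{W}_n|^{-1}(\ln\mathcal{Z}^{+}_{n;\lambdadec}-\ln\mathcal{Z}^{-}_{n;\lambdadec})$, and invokes the convex-function theorem \cite[Theorem B.12]{FV-Book} to exchange limit and derivative, identifying $0=\partial_\lambda\tau_w(J,\lambdadec)$ with your $g(\lambda)$. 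You instead stay entirely with the integral representation: $\int_{\overline{\lambda}_c}^{\lambda}g=0$ together with $g\geq 0$ and $g$ non-increasing forces $g(\lambda)=0$. Both routes are sound; yours trades the convex-analysis interchange lemma for the monotonicity of $g$ in $s$, which is exactly \eqref{Eq: Diff_mag_is_decreasing} passed to the infinite-volume limit --- a fact the paper already establishes and uses in Lemma \ref{Lemma: Comparing_taus} and in the other direction of this proposition, so nothing new is needed. Two harmless glosses: the attainment of the maximum at $\overline{\lambda}_c$ is not literally in Lemma \ref{Lemma: Comparing_taus} but follows from continuity of the concave function $\lambda\mapsto\tau_w(J,\lambdadec)$, as the paper itself remarks; and you close via Proposition \ref{basta.comparar.magnetizacoes} at a single site where the paper uses translation invariance to get equality at every site --- both are legitimate, and the paper applies Proposition \ref{basta.comparar.magnetizacoes} to the semi-infinite states in exactly this way in Section 4.
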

\begin{proof}
    Fixed $0\leq \lambda < \overline{\lambda}_c$, lets assume by contradiction that $\langle \cdot \rangle_{J,\lambdadec}^+ =  \langle \cdot \rangle_{J,\lambdadec}^-$. As we argued before, inequality \eqref{Eq: Diff_mag_is_decreasing} shows that the difference $\langle \sigma_i \rangle_{n;J,\lambdadec}^+ - \langle \sigma_i \rangle_{n;J,\lambdadec}^-$ is decreasing in $\lambda$ for all $i\in\H+$. Hence, for every $\lambda^\prime>\lambda$, 
    \begin{equation*}
        \langle \sigma_i \rangle_{n;J,\widehat{\bm{\lambda^\prime}}}^+ - \langle \sigma_i \rangle_{n;J,\widehat{\bm{\lambda^\prime}}}^- \leq \langle \sigma_i \rangle_{n;J,\lambdadec}^+ - \langle \sigma_i \rangle_{n;J,\lambdadec}^- = 0.
    \end{equation*}
By Proposition \ref{Prop: Tilde_tau_as_integral}, this implies that 
\begin{equation*}
     {\tau}_w(J,\lambdadec) = \int_0^\lambda \sum_{\ell=1}^\infty \frac{1}{\ell^{\delta}}\left( \langle \sigma_{\ell e_d} \rangle^+_{J, \widehat{\bm{s}}} - \langle \sigma_{\ell e_d} \rangle^-_{J, \widehat{\bm{s}}} \right) ds = \int_0^{\overline{\lambda}_c} \sum_{\ell=1}^\infty \frac{1}{\ell^{\delta}}\left( \langle \sigma_{\ell e_d} \rangle^+_{J, \widehat{\bm{s}}} - \langle \sigma_{\ell e_d} \rangle^-_{J, \widehat{\bm{s}}} \right) ds = \max_{s\geq 0}{\tau}_w(J,\widehat{\bm{s}}).
\end{equation*}
In the last equation, we are using that the maximum is reached at $\overline{\lambda}_c$ since all concave functions are continuous. This shows that $\lambda\geq \overline{\lambda}_c$, a contradiction. 

 For $\lambda>\overline{\lambda}_c$, since ${\tau}_w$ is non-decreasing in $\lambda$, ${\tau}_w(J,\lambdadec) = \max_{s\geq 0}{\tau}_w(J,\widehat{\bm{s}})$. Moreover, it is differentiable in $\lambda$ and is the point-wise limit of the sequence ${|\mathcal{W}_n|^{-1}}\left(\ln{\mathcal{Z}^{+}_{n; \lambdadec} - \ln {\mathcal{Z}^{-}_{n; \lambdadec}}}\right)$, which is concave. We can then use a known theorem for convex functions, for example, \cite[Theorem B.12 ]{FV-Book}, to conclude that
\begin{align*}
    0 = \partial_\lambda{\tau}_w(J,\lambdadec) = \lim_{n\to\infty} \frac{1}{\left|\calW_n\right|}\partial_\lambda\left(\ln{\mathcal{Z}^{+}_{n; \lambdadec}} - \ln {\mathcal{Z}^{-}_{n; \lambdadec}}\right) &= \lim_{n\to\infty} \frac{1}{|\calW_n|}\sum_{i\in \mathcal{W}_n^\prime} \sum_{\ell=1}^{n} \frac{1}{\ell^\delta}\left(\langle \sigma_{i+\ell e_d} \rangle^{+,J}_{n;\lambdadec} - \langle \sigma_{i+\ell e_d} \rangle^{-,J}_{n;\lambdadec}\right) \\
    &= \sum_{\ell=1}^\infty \frac{1}{\ell^{\delta}}\left( \langle \sigma_{\ell e_d} \rangle^+_{J, \lambdadec} - \langle \sigma_{\ell e_d} \rangle^-_{J, \lambdadec} \right).
\end{align*}
By FKG, all the terms in the sum are non-negative, therefore $\langle \sigma_{\ell e_d} \rangle^+_{J, \lambdadec}= \langle \sigma_{\ell e_d} \rangle^-_{J, \lambdadec}$ for all $\ell\geq 1$. By translation invariance, we conclude that $\langle \sigma_{i} \rangle^+_{J, \lambdadec} = \langle \sigma_{i} \rangle^-_{J, \lambdadec}$ for all $i\in\H+$, what shows uniqueness. 
\end{proof}

\section{Uniqueness for \texorpdfstring{$\delta<1$}{d<1}}In this section, we will prove the uniqueness of the semi-infinite Ising model with external field $\lambdadec$ with $\delta<1$, for any inverse temperature $\beta>0$ and ferromagnetic interaction. We again fix $\beta=1$ to simplify the notation. For every $k\in\Z$, let $L_k\coloneqq \Z^{d-1}\times \{k\}$ the the layer at weight $k$. We first prove uniqueness for the Ising model in $\Z^d$ with external field $\bm{h}^*$ given by \eqref{particular.external.field} and interaction $\bm{J}_{\lambda}=(J_{i,j})_{i,j\in\Z^d}$ given by
\begin{equation}\label{Eq: Definition_of_J_Lambda}
    (\bm{J}_{\lambda})_{i,j} = \begin{cases}
                                            \frac{\lambda}{2} &\text{ if }i\in L_0 \text{ and }  j\in L_{-1}\cup L_{1},\\
                                            J &\text{ otherwise.}
                                        \end{cases}
\end{equation}
for $|i-j|=1$. As we are always considering short-range interactions, $J_{i,j}=0$ whenever $|i-j|\neq 1$. After that, we show how uniqueness for this model implies uniqueness for our model of interest. 

The proof of uniqueness given by \cite{Bissacot_Cass_Cio_Pres_15} together with \cite{Cioletti_Vila_2016} only considers constant interactions. The extension to the interaction $\bm{J}_\lambda$ is a direct consequence of the monotonicity properties of the random cluster representation, proved first in \cite{Biskup_Borgs_Chayes_Kotecky_00} for constant external fields and extended in \cite{Cioletti_Vila_2016} to more general models. 

\subsection{Random Cluster Representation and Edward-Sokal coupling}
    
In this section, following \cite{Biskup_Borgs_Chayes_Kotecky_00} and \cite{Cioletti_Vila_2016}, we define the Random Cluster model (RC) and the Edward-Sokal (ES) coupling between the RC model and the Ising model. Then we show that uniqueness for the ES model implies uniqueness for the Ising model and we introduce the monotonicity property of the RC model we are interested in.

In \cite{Biskup_Borgs_Chayes_Kotecky_00} and \cite{Cioletti_Vila_2016}, they consider the Potts model and the General Random Cluster model, so their setting is more general.  We will restrict the results presented here to a particular case of interest.  

\subsubsection{The Random Cluster model}
    Given $\E = \{ \{i,j\}\subset\Z^d : \ |i-j|=1\}$, $(\Z^d,\E)$ defines a graph. The configuration space of the RC model is $\{0,1\}^\E$. A general configuration will be denoted $\omega$ and called an \textit{edge configuration}. An edge $e\in\E$ is \textit{open} (in a configuration $\omega$) if $\omega_e=1$, and it is \textit{closed} otherwise. A path $(e_0,e_1,\dots,e_n)$ is an \textit{open path} if $\omega_{e_k} = 1$ for all $k=0,\dots,n$. Vertices $i,j\in\Z^d$ are \textit{connected in} $\omega$ if there is an open path $(e_0,\dots,e_n)$ connecting $i$ and $j$, that is, $i\in e_0$ and $j\in e_n$. We denote $x\longleftrightarrow y$ when $x$ and $y$ are connected in $\omega$. The open connected component of $x\in\Z^d$ is $C_x(\omega) = \{\{i,j\}\in\E : x \longleftrightarrow i\}\cup\{x\}$. An arbitrary connected component of $\omega$ is denoted $C(\omega)$. Moreover, for any $E\subset \E$, the vertices touched by $E$ are $\V(E) = \{x\in\Z^d :  x\in e \textit{ for some }e\in E\}$. 

    Given $\Lambda\Subset \Z^d$, consider $E(\Lambda) = \{\{i,j\}\in\E : i\in\Lambda \}$ the edges with at least one endpoint in $\Lambda$. Consider also $E_0(\Lambda) = \{e\in \E : e\subset \Lambda\}$, the edges with both endpoints in $\Lambda$. For any $G=(V,E)$ finite sub-graph of $(\Z^d,\E)$, the probability measure of the Random Cluster model in $E\Subset \E$ with ferromagnetic interaction $J = (J_{i, j})_{i,j}$, external field $\bm{h}=(h)_{x\in\Z^d}$ and boundary condition $\omega_{E^c}\in\{0,1\}^{E^c}$ is 
    \begin{equation}
    \phi_{E;\bm{J}, \bm{h}}(\omega_E|\omega_{E^c}) = \frac{1}{Z^{RC, \omega_{E^c}}_{E; \bm{J},\bm{h}}}B_{\bm{J}}(\omega_E)\prod_{\substack{C(\omega): C(\omega)\cap \V(E) \neq \emptyset}}(1+ e^{-2\sum_{i\in C(\omega)}h_i}) ,     
    \end{equation}
    where the product is taken over connected open clusters only, with the convention that $e^{-\infty}=0$. The term in the denominator is the usual partition function
    \begin{equation*}
        Z^{RC, \omega_{E^c}}_{E; \bm{J},\bm{h}} = \sum_{\omega_E\in\{0,1\}^E} B_{\bm{J}}(\omega_E)\prod_{\substack{C(\omega): C(\omega)\cap \V(E) \neq \emptyset}}(1+ e^{-2\sum_{i\in C(\omega)}h_i}).
    \end{equation*}
and $B_{\bm{J}}$ is the Bernoulli like factor
    \begin{equation*}
        B_{\bm{J}}(\omega) \coloneqq \prod_{e:\omega_e=1}(e^{2 J_e} -1).
    \end{equation*}
This is not a Bernoulli factor since the weights can be bigger than one. Moreover, the interaction of an edge $e=\{i,j\}$ is, as expected, $J_e\coloneqq J_{i,j}$. For $i=0,1$,  let $\omega_{E}^{(i)}$ be the configuration satisfying $\omega_e^{(i)}=i$ for all $e\in E^c$. Two particularly important measures are the \textit{RC model with free boundary condition} in $\Lambda\Subset\Z^d$, given by
\begin{equation*}
    \phi_{\Lambda; \bm{J}, \bm{h}}^{0} \coloneqq \phi_{E_0(\Lambda); \bm{J}, \bm{h}}(\omega_{E_0(\Lambda)}|\omega_{E_0(\Lambda)}^{(0)}),
\end{equation*}
and the \textit{RC model with wired boundary condition} in $\Lambda\Subset\Z^d$, given by
\begin{equation*}
     \phi_{\Lambda; \bm{J}, \bm{h}}^{1} \coloneqq \phi_{E_0(\Lambda); \bm{J}, \bm{h}}(\omega_{E_0(\Lambda)}|\omega_{E_0(\Lambda)}^{(1)}).
\end{equation*}
The RC model is related to the Ising model through the Edwards-Sokal coupling, introduced next.
\subsubsection{The Edwards-Sokal model}

Given $\Lambda\Subset\Z^d$ and $E\Subset\E$, two configurations $\sigma\in\Omega$, $\omega\in\{0,1\}^\E$ and weights
\begin{equation*}
    \mathcal{W}(\sigma_\Lambda, \omega_E | \sigma_{\Lambda^c}, \omega_{E^c}) = \prod_{\substack{\{i,j\}\in E: \\ \omega_{i,j}=1}} \delta_{\sigma_i,\sigma_j}(e^{2 J_{i,j}}-1)\prod_{ i\in \Lambda}e^{ h_i\sigma_i},
\end{equation*}
the Edwards-Sokal (ES) measure in $\Lambda\Subset\Z^d$ and $E\Subset \E$ is given by
\begin{equation*}
    \phi^{ES}_{\Lambda, E; \bm{J}, \bm{h}}(\sigma_\Lambda, \omega_E | \sigma_{\Lambda^c}, \omega_{E^c}) \coloneqq \frac{\mathcal{W}(\sigma_\Lambda, \omega_E | \sigma_{\Lambda^c}, \omega_{E^c})}{Z^{ES}_{\Lambda, E; \bm{J}, \bm{h}}(\sigma_{\Lambda^c}, \omega_{E^c})},
\end{equation*}
with 
\begin{equation*}
    Z^{ES}_{\Lambda, E; \bm{J}, \bm{h}}(\sigma_{\Lambda^c}, \omega_{E^c}) \coloneqq \sum_{\substack{\eta_\Lambda\in\Omega_\Lambda \\ \xi_E\in \{0,1\}^E}}\mathcal{W}(\eta_\Lambda, \xi_E | \sigma_{\Lambda^c}, \omega_{E^c}).
\end{equation*}
If $Z^{ES}_{\Lambda, E; \bm{J}, \bm{h}}(\sigma_{\Lambda^c}, \omega_{E^c}) = 0$, we simply take $\phi^{ES}_{\Lambda, E; \bm{J}, \bm{h}}(\cdot | \sigma_{\Lambda^c}, \omega_{E^c}) = 0$. To simplify the notation, as we are considering arbitrary interactions and external fields, we will omit them from the notation. 

\begin{remark}
    The measures $\phi^{ES}_{\Lambda, E(\Lambda)}(\cdot|\sigma_{\Lambda^c}, \omega_{E^{c}(\Lambda)})$ do not depend on the choice of configuration $\omega_{E^{c}(\Lambda)}$. We choose to keep it in the notation since, later on, we will want to see $\phi^{ES}_{\Lambda, E(\Lambda)}$ as a specification. 
\end{remark}

The ES is indeed a coupling between the Ising and the RC model, see \cite[Theorem 1]{Cioletti_Vila_2016}. To define the infinity volume measures, we use the DLR equations. Let $\mathcal{F}_1$ (respectively $\mathcal{F}_2$) be the sigma-algebra generated by the cylinders sets on $\{0,1\}^\E$ (respectively on $\Omega\times \{0,1\}^\E$). We take $\mathcal{P}(\{0,1\}^\E)$ the set of probability measures in $(\{0,1\}^\E, \mathcal{F}_1)$ and $\mathcal{P}(\Omega\times\{0,1\}^\E)$ the set of probability measures in $(\Omega\times\{0,1\}^\E, \mathcal{F}_2)$. The set of RC measures is 
\begin{equation*}
    \mathcal{G}^{RC}_{\bm{J},\bm{h}} \coloneqq \left\{\phi\in \mathcal{P}(\{0,1\}^\E) : \phi(f) = \int \phi_E(f|\omega_E^c)\}d\phi(\omega), \text{ whenever } \supp(f) \subset E \text{ and } E\Subset \E\right\}. 
\end{equation*}
Analogously, the set of ES measures is 
\begin{multline}
    \mathcal{G}^{ES}_{\bm{J}, \bm{h}} \coloneqq \left\{ \nu\in \mathcal{P}(\Omega\times\{0,1\}^\E) :  \nu(f) = \int \phi^{ES}_{\Lambda, E(\Lambda)}(f|\sigma_\Lambda^c, \omega_{E(\Lambda)^c})d\nu(\sigma, \omega), \right. \\ \left. \text{ whenever } \supp(f) \subset \Lambda\times E(\Lambda) \text{ and } \Lambda\Subset \Z^d \right\}.
\end{multline}
We will often omit the parameter $\bm{h}$ in statements that hold for an arbitrary choice of external field.
\begin{remark}
    Since the families $\{\phi_E\}_{E\Subset\E}$ and $\{\phi^{ES}_{\Lambda, E(\Lambda)}\}_{\Lambda\Subset\Z^d}$ are specifications, the sets $\mathcal{G}^{RC}_{\bm{J}}$ and $\mathcal{G}^{ES}_{\bm{J}}$ are the usual set of DLR Gibbs measures. 
\end{remark}

The spin marginal measure of an infinity ES measure is a Gibbs measure in $\mathcal{G}_{\bm{J}}$. In fact, an even stronger statement holds. The following theorem was proved in \cite{Biskup_Borgs_Chayes_Kotecky_00} and extended to general external fields in \cite{Cioletti_Vila_2016}.

\begin{theorem}\label{Theo: ES_to_Ising_isomosphism}
Let $\Pi_S: \mathcal{G}^{ES}_{\bm{J}}: \longrightarrow \mathcal{G}^{IS}_{\bm{J}}$ be the application that takes an ES - measure to its spin marginal measure, that is, for any $\nu\in\mathcal{G}^{ES}_{\bm{J}}$ and $f:\Omega\longrightarrow \mathbb{R}$ with $\supp(f)\Subset \Z^d$, 
\begin{equation*}
    \Pi_S(\nu)(f) \coloneqq \int f(\sigma) d\nu(\sigma,\omega).
\end{equation*}
Then, $\Pi_S$ is a linear isomorphism. In particular, $|\mathcal{G}^{ES}_{\bm{J}}| = 1$ if and only if $|\mathcal{G}_{\bm{J}}^{IS}|=1$. 
\end{theorem}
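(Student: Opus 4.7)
The plan is to construct an explicit candidate inverse for $\Pi_S$ via a bond-sampling procedure. Given an Ising measure $\mu\in\mathcal{G}^{IS}_{\bm{J}}$, I would define a joint measure $\nu$ on $\Omega\times\{0,1\}^\E$ by sampling $\sigma$ from $\mu$ and then, conditionally on $\sigma$, independently opening each bond $e=\{i,j\}\in\E$ with probability $(1-e^{-2J_{i,j}})\mathbf{1}_{\sigma_i=\sigma_j}$. The conditional independence of the edges makes $\nu$ well defined by Kolmogorov's extension theorem, and by construction $\Pi_S(\nu)=\mu$, so once I verify $\nu\in\mathcal{G}^{ES}_{\bm{J}}$ surjectivity will be immediate. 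Linearity of $\Pi_S$ is obvious from its definition, so the real content is: (i) $\Pi_S$ actually maps into $\mathcal{G}^{IS}_{\bm{J}}$; (ii) the $\nu$ above lies in $\mathcal{G}^{ES}_{\bm{J}}$; (iii) injectivity.

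For (i), which is the easy direction, I would observe that summing the weight $\mathcal{W}(\sigma_\Lambda,\omega_{E(\Lambda)}\mid\sigma_{\Lambda^c},\omega_{E(\Lambda)^c})$ over $\omega_{E(\Lambda)}$ produces, for each edge $e=\{i,j\}$, a factor $1+\delta_{\sigma_i,\sigma_j}(e^{2J_e}-1)=e^{2J_e\mathbf{1}_{\sigma_i=\sigma_j}}$, which up to a $\sigma$-independent constant equals $e^{J_e\sigma_i\sigma_j}$. Combining these factors with the $e^{h_i\sigma_i}$ terms recovers the Ising Boltzmann weight in $\Lambda$, so the $\sigma$-marginal of $\phi^{ES}_{\Lambda,E(\Lambda)}(\cdot\mid\sigma_{\Lambda^c},\omega_{E(\Lambda)^c})$ is exactly the Ising specification on $\Lambda$ with boundary condition $\sigma_{\Lambda^c}$. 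The ES DLR equations for any $\nu\in\mathcal{G}^{ES}_{\bm{J}}$ then project onto the Ising DLR equations for $\Pi_S(\nu)$.

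Step (ii) is the main technical content. The key identity is that the two conditional distributions of $\omega_{E(\Lambda)}$ given $\sigma$ coincide: the product-Bernoulli distribution used in the construction opens each edge $e=\{i,j\}$ with $\sigma_i=\sigma_j$ with probability $1-e^{-2J_e}$ and forces edges with $\sigma_i\ne\sigma_j$ closed, which is precisely what $\phi^{ES}(\omega_{E(\Lambda)}\mid\sigma,\omega_{E(\Lambda)^c})$ produces after normalising the edge weights $[\delta_{\sigma_i,\sigma_j}(e^{2J_e}-1)]^{\omega_e}$ (with open probability $(e^{2J_e}-1)/e^{2J_e}=1-e^{-2J_e}$ when endpoints agree). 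Combining this identity with the computation from (i) and the Ising DLR equation satisfied by $\mu$ then yields $\nu(f)=\int\phi^{ES}_{\Lambda,E(\Lambda)}(f\mid\sigma_{\Lambda^c},\omega_{E(\Lambda)^c})\,d\nu(\sigma,\omega)$ for every $f$ with $\supp(f)\subset\Lambda\times E(\Lambda)$, establishing $\nu\in\mathcal{G}^{ES}_{\bm{J}}$.

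For (iii), the same conditional-distribution identity shows that for \emph{any} $\nu\in\mathcal{G}^{ES}_{\bm{J}}$ the conditional law $\nu(\omega\mid\sigma)$ is forced to be the product-Bernoulli kernel, so $\nu$ is determined by its spin marginal. The main obstacle in the whole argument is not any single nontrivial estimate but the bookkeeping required to identify the two conditionals correctly (keeping track of which boundary spins enter through edges in $E(\Lambda)$ with an endpoint outside $\Lambda$) and to pass cleanly from the finite-volume ES specification to the global DLR statement. Once this is in place, $\Pi_S$ is a linear bijection, and the equivalence $|\mathcal{G}^{ES}_{\bm{J}}|=1\Leftrightarrow|\mathcal{G}^{IS}_{\bm{J}}|=1$ follows immediately.
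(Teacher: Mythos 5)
Your proposal is essentially correct, but note that the paper does not prove this theorem at all: it is quoted from the literature (Biskup--Borgs--Chayes--Koteck\'y for constant fields, Cioletti--Vila for general non-negative fields), and the argument you outline is precisely the standard Edwards--Sokal coupling proof given there. Your computations check out against the paper's conventions: summing the weight $\mathcal{W}$ over $\omega_e\in\{0,1\}$ gives $1+\delta_{\sigma_i,\sigma_j}(e^{2J_e}-1)=e^{J_e}e^{J_e\sigma_i\sigma_j}$, recovering the Ising specification of \eqref{Hamiltonian.Ising}, and the conditional edge law given the spins is the product-Bernoulli kernel with opening probability $1-e^{-2J_e}$ on agreeing edges, independent of $\bm{h}$ since the field enters $\mathcal{W}$ only through the spin factor $\prod_i e^{h_i\sigma_i}$. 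Two points deserve one more sentence each in a written-up version. First, for injectivity the DLR equations only give you the conditional law of $\omega_{E(\Lambda)}$ given $(\sigma,\omega_{E(\Lambda)^c})$ for finite $\Lambda$; to conclude that $\nu(\,\cdot\mid\sigma)$ is globally the product kernel, and hence that the finite-dimensional distributions of $\nu$ are determined by its spin marginal, you should integrate out $\omega_{E(\Lambda)^c}$ and invoke consistency (or a backward-martingale argument) as $\Lambda\nearrow\Z^d$. Second, the paper defines $\mathcal{G}^{IS}_{\bm{J}}$ as a closed convex hull of weak* limits while $\mathcal{G}^{ES}_{\bm{J}}$ is defined by DLR equations; your argument naturally lands in the set of DLR measures for the Ising specification, so you need the standard identification of that set with the paper's $\mathcal{G}^{IS}_{\bm{J}}$ for nearest-neighbour ferromagnets. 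Neither point is a gap in substance, and the conclusion $|\mathcal{G}^{ES}_{\bm{J}}|=1\Leftrightarrow|\mathcal{G}^{IS}_{\bm{J}}|=1$ follows as you say.
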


This shows that uniqueness for the ES model implies uniqueness for the Ising model. It is possible to relate the uniqueness of the RC model with the uniqueness of the ES model. To do so, we use the FKG property, and some consequences of it, of the RC and ES models. The main contribution of \cite{Cioletti_Vila_2016} was the extension of these properties from the models with constant external fields to models with non-constant external fields. These results are described next.

 The key contribution of \cite{Cioletti_Vila_2016} was to prove that the RC model with space-dependent external fields has the \textit{strong FKG} property, see \cite[Theorem 4]{Cioletti_Vila_2016}. Two consequences of the strong FKG property we will use are the extremality of the free and wired states \cite[Theorem 6]{Cioletti_Vila_2016} and the monotonicity of these states with respect to the volume \cite[Lemma 9]{Cioletti_Vila_2016}.
As we did for the configuration space, we can consider a partial order on $\{0,1\}^\E$ defining $\omega\leq \omega^\prime$ when $\omega_e\leq\omega^\prime_e$ for all $e\in\E$, what also defines increasing function. The next result allows us to compare the models with interaction $\bm{J}_\lambda$ and constant interaction $\bm{J}\equiv J$. The proof is a straightforward adaptation of \cite[Theorem 7]{Cioletti_Vila_2016}.

\begin{proposition}\label{Prop: RC_is_increasing_in_J}
   Let $\bm{J}=\{J_{i,j}\}_{i,j\in \Z^d}$ and $\bm{J}^\prime=\{J^\prime_{i,j}\}_{i,j\in \Z^d}$ be nearest-neighbor interactions with $0\leq J_{i,j}\leq J^\prime_{i,j}$, for all $i,j\in\Z^d$. Then, for any $\Lambda\Subset\Z^d$ and $f$ local non-decreasing function,
   \begin{equation*}
       \phi^0_{\Lambda; \bm{J}}(f)\leq \phi^0_{\Lambda; \bm{J}^\prime}(f) \qquad \text{ and } \qquad   \phi^1_{\Lambda; \bm{J}}(f)\leq \phi^1_{\Lambda; \bm{J}^\prime}(f).
   \end{equation*}
\end{proposition}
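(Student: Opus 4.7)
The plan is to express $\phi^0_{\Lambda; \bm{J}'}$ as a biased version of $\phi^0_{\Lambda; \bm{J}}$, with a bias that is a non-decreasing function on $\{0,1\}^{E_0(\Lambda)}$, and then invoke the FKG inequality supplied by the strong FKG property of the Random Cluster model established in \cite{Cioletti_Vila_2016}. The key observation is that the cluster weight $\prod_{C:C\cap\V(E)\neq\emptyset}(1+e^{-2\sum_{i\in C}h_i})$ in the density of $\phi^0_{\Lambda;\bm{J}}$ depends only on $\omega$ and $\bm{h}$, so it cancels in the ratio of the two densities; the $\bm{J}$-dependence is carried entirely by the Bernoulli-like factor $B_{\bm{J}}$.

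Assuming first that $J_e>0$ for every edge (the degenerate case is discussed below), the Radon--Nikodym derivative reduces to
\begin{equation*}
    \frac{d\phi^0_{\Lambda;\bm{J}'}}{d\phi^0_{\Lambda;\bm{J}}}(\omega) \;=\; \frac{Z^{RC,0}_{\Lambda;\bm{J}}}{Z^{RC,0}_{\Lambda;\bm{J}'}}\, g(\omega), \qquad g(\omega) \coloneqq \prod_{e:\omega_e=1}\frac{e^{2J'_e}-1}{e^{2J_e}-1}.
\end{equation*}
Each factor in $g$ is at least $1$ since $J'_e\geq J_e$, so $g$ is non-decreasing in $\omega$ (flipping a closed edge to an open one only multiplies $g$ by a factor $\geq 1$). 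Then, for any local non-decreasing $f$, I would write
\begin{equation*}
    \phi^0_{\Lambda;\bm{J}'}(f) \;=\; \frac{\phi^0_{\Lambda;\bm{J}}(f\,g)}{\phi^0_{\Lambda;\bm{J}}(g)},
\end{equation*}
and apply the FKG inequality under $\phi^0_{\Lambda;\bm{J}}$ (a direct consequence of the strong FKG property proved in \cite{Cioletti_Vila_2016}) to the two non-decreasing functions $f$ and $g$: $\phi^0_{\Lambda;\bm{J}}(fg)\geq \phi^0_{\Lambda;\bm{J}}(f)\,\phi^0_{\Lambda;\bm{J}}(g)$, giving $\phi^0_{\Lambda;\bm{J}'}(f)\geq \phi^0_{\Lambda;\bm{J}}(f)$. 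The wired case is handled verbatim using the strong FKG property of $\phi^1_{\Lambda;\bm{J}}$.

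The main obstacle is the degenerate situation $J_e=0<J'_e$, where the denominator in $g$ vanishes and the Radon--Nikodym argument above breaks down. I would resolve this by interpolating along $\bm{J}(t)=(1-t)\bm{J}+t\bm{J}'$ and showing that $\tfrac{d}{dt}\phi^0_{\Lambda;\bm{J}(t)}(f)\geq 0$ for $t\in(0,1]$. Direct differentiation yields, up to a positive prefactor, a sum over edges of covariances of the form $\phi^0_{\Lambda;\bm{J}(t)}(f\,\mathbbm{1}_{\omega_e=1}) - \phi^0_{\Lambda;\bm{J}(t)}(f)\,\phi^0_{\Lambda;\bm{J}(t)}(\mathbbm{1}_{\omega_e=1})$, each non-negative by FKG applied to the non-decreasing functions $f$ and $\mathbbm{1}_{\omega_e=1}$. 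Continuity at $t=0$ then follows by dominated convergence, patching the zero-interaction edges and completing the argument. The same interpolation delivers the wired case simultaneously.
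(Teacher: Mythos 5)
Your proof is correct and is essentially the paper's argument: a change of measure between $\phi^0_{\Lambda;\bm{J}}$ and $\phi^0_{\Lambda;\bm{J}^\prime}$ via an explicit edge density, followed by the FKG inequality (and verbatim for the wired case). The only difference is the orientation of the tilt --- the paper writes $\phi^0_{\Lambda;\bm{J}}(f)=\phi^0_{\Lambda;\bm{J}^\prime}(f g)/\phi^0_{\Lambda;\bm{J}^\prime}(g)$ with the \emph{non-increasing} density $g(\omega)=\prod_{e}\bigl((e^{2J_e}-1)/(e^{2J^\prime_e}-1)\bigr)^{\omega_e}$, which simply vanishes on configurations opening an edge with $J_e=0$ and therefore needs no interpolation patch for the degenerate case that your reciprocal density forces you to handle separately.
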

\begin{proof}
    Consider a function $g:\{0,1\}^{E(\Lambda)}\longrightarrow \mathbb{R}$ given by
    \begin{equation*}
    g(\omega)= \prod_{e\in E(\Lambda)}\left(\frac{e^{2 J_e}-1}{e^{2 J^\prime_e} - 1}\right)^{\omega_e}.
    \end{equation*}
    By the restriction on $\bm{J}$ and $\bm{J}^\prime$, all the fractions above are at most $1$, so $g$ is non-increasing. Given a non-decreasing local function $f$, 
    \begin{equation*}
        \phi^0_{\Lambda;\bm{J}}(f) = \frac{1}{Z^{0}_{\Lambda;\bm{J}}}\sum_{\omega\in\{0,1\}^{E_0(\Lambda)}}f(\omega)g(\omega)\prod_{e:\omega_e=1}\left(e^{2 J_e^\prime}-1\right)\prod_{C(\omega)}\left(1+e^{-2\sum_{i\in C(\omega)}h_i}\right) =  \frac{Z^{0}_{\Lambda;\bm{J}^\prime}}{Z^{0}_{\Lambda;\bm{J}}}\phi^0_{\Lambda;\bm{J}^\prime}(f.g).
    \end{equation*}
    Taking, in particular, $f\equiv 1$, we get $\phi^0_{\Lambda;\bm{J}^\prime}(g) = \frac{Z^{0}_{\Lambda;\bm{J}}}{Z^{0}_{\Lambda;\bm{J}^\prime}}$. Using the FKG property, we conclude that 
    \begin{equation*}
        \phi^0_{\Lambda;\bm{J}}(f)  = \frac{\phi^0_{\Lambda;\bm{J}^\prime}(f.g)}{\phi^0_{\Lambda;\bm{J}^\prime}(g)}\leq \phi^0_{\Lambda;\bm{J}^\prime}(f).
    \end{equation*}
This same argument can be made for the wired boundary condition, which concludes the proof.
\end{proof}

To guarantee uniqueness for the RC model, we can use the quantity
\begin{equation*}
    P_\infty(\bm{J}, \bm{h}) \coloneqq \sup_{x\in\Z^d}\sup_{\phi\in \mathcal{G}^{RC}}\phi\left(|C_x| = +\infty \right),
\end{equation*}
together with the next theorem, proved in \cite{Cioletti_Vila_2016}.
\begin{theorem}\label{Theo: Uniqueness_with_P_infinity}
    For any ferromagnetic nearest-neighbor interaction $\bm{J}=\{J_{i,j}\}_{i,j\in\Z^d}$ and non-negative external field $\bm{h}=(h_{i})_{i\in\Z^d}$, if $P_\infty(\bm{J}, \bm{h})=0$, then $\left|\mathcal{G}^{ES}_{J}\right|=\left|\mathcal{G}^{RC}_{J}\right|=1$.
\end{theorem}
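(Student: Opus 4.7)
The plan is to first establish $|\mathcal{G}^{RC}_{\bm{J}}|=1$ from the hypothesis $P_\infty(\bm{J},\bm{h})=0$, and then deduce $|\mathcal{G}^{ES}_{\bm{J}}|=1$ via the Edwards-Sokal coloring construction.

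For the random-cluster uniqueness I would exploit the strong FKG property established in \cite{Cioletti_Vila_2016}. Monotonicity of $\phi^0_\Lambda$ in $\Lambda$ (non-decreasing) and of $\phi^1_\Lambda$ in $\Lambda$ (non-increasing) ensures the limits $\phi^0,\phi^1$ exist, and by the extremality result, every $\phi\in\mathcal{G}^{RC}_{\bm{J}}$ satisfies $\phi^0(f)\leq\phi(f)\leq\phi^1(f)$ for every non-decreasing local $f$. Hence RC uniqueness reduces to showing $\phi^0=\phi^1$. Since $\phi^1$ is extremal in $\mathcal{G}^{RC}_{\bm{J}}$ it is tail trivial, so $\phi^1(|C_x|=\infty)\in\{0,1\}$; combined with $P_\infty(\bm{J},\bm{h})=0$ this yields $\phi^1(|C_x|=\infty)=0$ for every $x\in\Z^d$.

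The core step is then to prove that the absence of an infinite cluster under $\phi^1$ forces $\phi^0=\phi^1$. Fix a non-decreasing local function $f$ supported on $E_0(\Lambda)$ for some $\Lambda\Subset\Z^d$, and for $\Delta\supset\Lambda$ let $\mathcal{A}_\Delta$ be the event that no vertex of $\Lambda$ is joined by an open path to the outer boundary of $\Delta$. On $\mathcal{A}_\Delta$ the boundary condition outside $\Delta$ does not affect any cluster that touches $E_0(\Lambda)$, which after a direct computation with the RC weights gives
\begin{equation*}
\phi^0_\Delta\bigl(f\,\mathbbm{1}_{\mathcal{A}_\Delta}\bigr)=\phi^1_\Delta\bigl(f\,\mathbbm{1}_{\mathcal{A}_\Delta}\bigr),
\end{equation*}
and hence $|\phi^1_\Delta(f)-\phi^0_\Delta(f)|\leq 2\|f\|_\infty\,\phi^1_\Delta(\mathcal{A}_\Delta^c)$. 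Letting $\Delta\nearrow\Z^d$, the right-hand side converges to $\phi^1(\Lambda\longleftrightarrow\infty)=0$, so $\phi^0(f)=\phi^1(f)$; as non-decreasing local functions are measure-determining (via FKG), $\phi^0=\phi^1$. The main obstacle will be making the above factorization of the RC weight on $\mathcal{A}_\Delta$ explicit with a non-constant external field, because the weight contains the product over clusters of $1+e^{-2\sum_{i\in C}h_i}$; the disconnection event $\mathcal{A}_\Delta$ is precisely what renders the clusters intersecting $E_0(\Lambda)$ independent of the boundary conditions, so the identity holds verbatim.

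For the Edwards-Sokal part, let $\phi$ denote the unique RC measure. Any $\nu\in\mathcal{G}^{ES}_{\bm{J}}$ has edge marginal lying in $\mathcal{G}^{RC}_{\bm{J}}$ and therefore equal to $\phi$. Since $\phi$-almost surely every open cluster is finite, the conditional distribution of the spin configuration given $\omega$ is determined: on each finite open cluster $C$ all spins must take a common value, and the two choices have conditional probabilities proportional to $e^{\sum_{i\in C}h_i}$ and $e^{-\sum_{i\in C}h_i}$ respectively, independently across distinct clusters. This uniquely specifies $\nu$, yielding $|\mathcal{G}^{ES}_{\bm{J}}|=1$.
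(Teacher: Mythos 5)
The paper offers no proof of this theorem: it is imported verbatim from \cite{Cioletti_Vila_2016}, so your proposal has to be measured against the argument given there. Your overall architecture coincides with that argument (extremality and volume-monotonicity of $\phi^0,\phi^1$ from strong FKG, reduction of RC uniqueness to $\phi^0=\phi^1$ via absence of an infinite cluster, then Edwards--Sokal uniqueness by independent coloring of the a.s.\ finite clusters). The Edwards--Sokal half of your write-up is fine, granted the citable fact that the edge marginal of an ES Gibbs measure is an RC Gibbs measure.

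The gap is in the core step. The identity $\phi^0_\Delta\bigl(f\,\mathbbm{1}_{\mathcal{A}_\Delta}\bigr)=\phi^1_\Delta\bigl(f\,\mathbbm{1}_{\mathcal{A}_\Delta}\bigr)$ is false. Test it with $f\equiv 1$, a perfectly admissible non-decreasing local function: it would assert $\phi^0_\Delta(\mathcal{A}_\Delta)=\phi^1_\Delta(\mathcal{A}_\Delta)$, yet $\mathcal{A}_\Delta$ is a decreasing event and the wired measure stochastically dominates the free one, so generically $\phi^1_\Delta(\mathcal{A}_\Delta)<\phi^0_\Delta(\mathcal{A}_\Delta)$. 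The heuristic that the boundary condition ``does not affect any cluster touching $E_0(\Lambda)$'' ignores that the two measures still weight the boundary-touching clusters, and hence the partition functions, differently, and that $\mathcal{A}_\Delta$ itself correlates the inside of $\Lambda$ with the outside. The correct version of this step conditions on the outermost closed separating set: on $\mathcal{A}_\Delta$ let $S$ be the outermost family of closed edges disconnecting $\Lambda$ from the boundary of $\Delta$; by the domain Markov property the conditional law of the configuration strictly inside $S$, given $S$ and everything outside it, is the free-boundary RC measure on the interior of $S$ (here the non-constant field causes no trouble, precisely because no cluster crosses $S$), and by volume-monotonicity of free-boundary measures its expectation of an increasing local $f$ is at most $\phi^0(f)$. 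This gives the one-sided bound $\phi^1_\Delta(f)\leq \phi^0(f)+2\|f\|_\infty\,\phi^1_\Delta(\mathcal{A}_\Delta^c)$, which, combined with $\phi^0(f)\leq\phi^1(f)$ and $\phi^1(\Lambda\longleftrightarrow\infty)=0$, yields $\phi^0=\phi^1$. With that replacement your proof closes; as written, the claimed exact factorization is the missing (and unfixable as stated) ingredient.
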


\subsection{Proof of Uniqueness when \texorpdfstring{$\delta<1$}{d<1}}
To prove uniqueness for the semi-infinite Ising model, we first prove uniqueness for the usual Ising model with interaction given by \eqref{Eq: Definition_of_J_Lambda} using the RC and ES models.

\begin{theorem}\label{Theo: Uniquiness_Ising_J_lambda}
    The Ising model in $\Z^d$ with interaction $\bm{J}_\lambda$ defined in \eqref{Eq: Definition_of_J_Lambda}, $0\leq\lambda\leq 2J$ and external field $\bm{h}^*=(h_i^*)_{i\in\Z^d}$ given by \eqref{particular.external.field} has a unique state. 
\end{theorem}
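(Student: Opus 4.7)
The plan is to reduce uniqueness for the Ising model to absence of percolation in the associated Random Cluster model, and then to use the edge-wise monotonicity in the couplings (Proposition \ref{Prop: RC_is_increasing_in_J}) to dominate $\bm{J}_\lambda$ by the constant interaction $\bm{J}\equiv J$, for which the result is already established in \cite{Cioletti_Vila_2016}.

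First, by Theorem \ref{Theo: ES_to_Ising_isomosphism} it is enough to prove $\left|\mathcal{G}^{ES}_{\bm{J}_\lambda}\right|=1$, and by Theorem \ref{Theo: Uniqueness_with_P_infinity} this follows from $P_\infty(\bm{J}_\lambda,\bm{h}^*)=0$. By the extremality of the wired state, a consequence of the strong FKG property extended in \cite{Cioletti_Vila_2016}, this further reduces to showing
\begin{equation*}
    \phi^1_{\bm{J}_\lambda,\bm{h}^*}\bigl(|C_x|=\infty\bigr) = 0 \quad\text{for every } x\in\Z^d.
\end{equation*}

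The comparison step uses the hypothesis $0\leq\lambda\leq 2J$, which gives $(\bm{J}_\lambda)_{i,j}\leq J$ for every nearest-neighbour edge. The event $\{|C_x|=\infty\}$ is not local, but it is the decreasing intersection of the local increasing events $A_R \coloneqq \{x\longleftrightarrow \Z^d\setminus \Lambda_R\}$ with $\Lambda_R \nearrow \Z^d$. Applying Proposition \ref{Prop: RC_is_increasing_in_J} to each $A_R$ in a finite volume $\Lambda\supset\Lambda_R$, taking the infinite-volume limit in $\Lambda$, and then letting $R\to\infty$ yields
\begin{equation*}
    \phi^1_{\bm{J}_\lambda,\bm{h}^*}\bigl(|C_x|=\infty\bigr) \;\leq\; \phi^1_{J,\bm{h}^*}\bigl(|C_x|=\infty\bigr).
\end{equation*}

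Finally, the RC-based proof of uniqueness in \cite{Cioletti_Vila_2016} for the constant-interaction Ising model with field $\bm{h}^*$ at $\delta<1$ proceeds precisely through $P_\infty(J,\bm{h}^*)=0$, so the right-hand side of the last display vanishes, and together with the extremality of the wired state this forces $P_\infty(\bm{J}_\lambda,\bm{h}^*)=0$. Theorems \ref{Theo: ES_to_Ising_isomosphism} and \ref{Theo: Uniqueness_with_P_infinity} then close the argument. The core of the proof is really just the monotonicity in the couplings (which is where the constraint $\lambda\leq 2J$ appears); the only mild subtlety is the approximation of the non-local percolation event by the local connection events $A_R$ before Proposition \ref{Prop: RC_is_increasing_in_J} can be invoked.
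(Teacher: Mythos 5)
Your proposal is correct and follows essentially the same route as the paper: reduce to $P_\infty(\bm{J}_\lambda,\bm{h}^*)=0$ via the Edwards--Sokal isomorphism and the percolation criterion, compare $\phi^1_{\bm{J}_\lambda,\bm{h}^*}$ with $\phi^1_{J,\bm{h}^*}$ using the monotonicity in the couplings (where $\lambda\leq 2J$ enters), and invoke the known result for constant $J$ together with extremality of the wired state. Your approximation of $\{|C_x|=\infty\}$ by the local events $A_R$ is a slightly more careful handling of a step the paper treats implicitly, but it is not a different argument.
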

\begin{proof}
    By Theorem \ref{Theo: ES_to_Ising_isomosphism}, it is enough to show that $\left| \mathcal{G}^{ES}_{ \bm{J}_\lambda, \bm{h}^*}\right|=1$. It was shown in \cite{Cioletti_Vila_2016} that, for any constant nearest-neighbor ferromagnetic interaction $\bm{J}\equiv J$, $P_\infty(\bm{J}, \bm{h}^*)=0$ and, in particular, $\phi^1_{\bm{J}, \bm{h}^*}\left( |C_x| = +\infty \right) = 0$ for all $x\in \Z^d$. 
    For any $x\in\Z^d$, the function $\mathbbm{1}_{\{|C_x| = +\infty\}}$ is increasing. Then, Proposition \ref{Prop: RC_is_increasing_in_J} yields
    \begin{equation*}
        \phi^1_{\Lambda; \bm{J}_\lambda, \bm{h}^*}\left( |C_x| = +\infty \right) \leq    \phi^1_{\Lambda; \bm{J}, \bm{h}^*}\left( |C_x| = +\infty \right)
    \end{equation*}
    for any $\Lambda\Subset\Z^d$ and $x\in\Z^d$. By the monotonicity of the wired stated with respect to the volume $\Lambda$, we can take the limit $\Lambda\nearrow\Z^d$ to get $ \phi^1_{\bm{J}_\lambda, \bm{h}^*}\left( |C_x| = +\infty \right) =0$ for all $x\in\Z^d$. Since $\phi^1_{\bm{J}_\lambda, \bm{h}^*}$ is extremal we conclude that $P_\infty(\bm{J}_\lambda, \bm{h}^*)=0$, and therefore we have $\left| \mathcal{G}^{ES}_{\bm{J}_\lambda, \bm{h}^*}\right|=1$ by Theorem \ref{Theo: Uniqueness_with_P_infinity}. 
\end{proof}

Now we prove the main result of this section. We prove uniqueness for the semi-infinite Ising with external field given by \eqref{particular.external.field} and $\delta<1$ at any temperature, by comparing it with the model of Theorem \ref{Theo: Uniquiness_Ising_J_lambda}.
\begin{theorem}
    The semi-infinite Ising model with interaction $J>0$ and external field $\lambdadec=(\lambda_i)_{i\in\H+}$ with $\lambda_i=\frac{\lambda}{i_d^\delta}$
for all $i\in\H+$ and $\delta<1$ has a unique Gibbs state. 
\end{theorem}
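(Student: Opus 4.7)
The plan is to reduce the semi-infinite uniqueness to Ising uniqueness on $\Z^d$ via the embedding (\ref{J.for.the.semi.infinite}), combined with a limiting argument in an auxiliary external field. The embedding identifies the finite-volume semi-infinite states with $\pm$-boundary condition in $\Lambda \Subset \H+$ with finite-volume Ising states on $\Z^d$ with interaction $J^\lambda$ and boundary condition $\eta^+$ (equal to $\pm 1$ on $\H+\setminus\Lambda$ and to $+1$ on $\Z^d\setminus\H+$). To realize the ``$+1$ on $\Z^d \setminus \H+$'' constraint inside a genuine infinite-volume Ising setup, I introduce for each $M > 0$ the external field $\tilde{\bm{h}}^M$ on $\Z^d$ with $\tilde{h}^M_i = \lambda/i_d^\delta$ for $i \in \H+$ and $\tilde{h}^M_i = M$ for $i \in \Z^d \setminus \H+$, and then take $M \to \infty$ at the end.

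For each fixed $M > 0$, I show the Ising model on $\Z^d$ with interaction $J^\lambda$ and field $\tilde{\bm{h}}^M$ has a unique Gibbs state, adapting the strategy of Theorem \ref{Theo: Uniquiness_Ising_J_lambda}. Setting $J^* \coloneqq \max(J,\lambda)$, one has $J^\lambda \leq J^*$ pointwise, so Proposition \ref{Prop: RC_is_increasing_in_J} gives $P_\infty(J^\lambda, \bm{h}^*) \leq P_\infty(J^*, \bm{h}^*)$, where $\bm{h}^* = (\lambda/|i|^\delta)_{i \in \Z^d}$ is the field of (\ref{particular.external.field}) with $h^* = \lambda$. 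The $\delta<1$ result of \cite{Cioletti_Vila_2016}, already invoked in the proof of Theorem \ref{Theo: Uniquiness_Ising_J_lambda}, yields $P_\infty(J^*, \bm{h}^*) = 0$, hence Theorem \ref{Theo: Uniqueness_with_P_infinity} gives uniqueness at $(J^\lambda, \bm{h}^*)$. Since $\tilde{\bm{h}}^M \geq \bm{h}^*$ pointwise (because $i_d \leq |i|$ on $\H+$, and I take $M \geq \lambda$), the monotonicity of the spin-difference in the external field, equation (\ref{Eq: Diff_mag_is_decreasing}), extended to the Ising model on $\Z^d$ via the same DVI argument, combined with Proposition \ref{basta.comparar.magnetizacoes}, transfers the uniqueness to $(J^\lambda, \tilde{\bm{h}}^M)$.

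Denote by $\mu_M^\pm$ the infinite-volume Ising $\pm$-states on $\Z^d$ with the parameters above, so $\mu_M^+ = \mu_M^-$ for every $M$. The final step is the limit $M \to \infty$: the strong field on $\Z^d \setminus \H+$ forces $\sigma = +1$ on those sites with probability tending to one, and by FKG monotonicity of $\mu_M^\pm(\sigma_i)$ in $M$ (for $i \in \H+$) the restrictions $\mu_M^\pm|_{\H+}$ converge to the semi-infinite $\pm$-states with external field $\lambdadec$. The identity $\mu_M^+ = \mu_M^-$ persists in the limit, giving $\langle \sigma_i \rangle^+_{\H+;\lambdadec} = \langle \sigma_i \rangle^-_{\H+;\lambdadec}$ for all $i \in \H+$, hence uniqueness of the semi-infinite Gibbs state. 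The hard part will be precisely this last step: rigorously interchanging the infinite-volume limit $\Lambda \nearrow \Z^d$ used to define each $\mu_M$ with the strong-field limit $M \to \infty$, and identifying the resulting restriction to $\H+$ with the semi-infinite DLR measure built via $\Lambda \nearrow \H+$. This will be carried out using FKG and monotone convergence, combined with the uniqueness already established at finite $M$.
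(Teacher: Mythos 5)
Your overall strategy is essentially the paper's: reduce the semi-infinite problem to uniqueness of an Ising model on $\Z^d$ with the radially decaying field \eqref{particular.external.field}, obtained from the RC comparison in the couplings (Proposition \ref{Prop: RC_is_increasing_in_J}) together with the DVI monotonicity of $\langle \sigma_i\rangle^+ - \langle\sigma_i\rangle^-$ in the field. Your variations --- placing a strong auxiliary field $M$ on $\Z^d\setminus\H+$ instead of reflecting the half-space, and comparing upward to the constant coupling $\max(J,\lambda)$ instead of splitting into the cases $\lambda\le 2J$ and $\lambda>2J$ --- are legitimate and arguably cleaner. The finite-$M$ uniqueness step is correct as written.

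The gap is the last step, and it is not merely technical. You take $\Lambda\nearrow\Z^d$ first, to build $\mu_M^{\pm}$, and then send $M\to\infty$, claiming that $\mu_M^{+}|_{\H+}$ converges to the semi-infinite $+$ state by FKG and monotone convergence. But $\mu^{+,J^\lambda}_{\Lambda;\tilde{\bm h}^M}(\sigma_i)$ is \emph{decreasing} in $\Lambda$ and \emph{increasing} in $M$, so $\lim_M \mu_M^+(\sigma_i)=\sup_M\inf_\Lambda \mu^{+,J^\lambda}_{\Lambda;\tilde{\bm h}^M}(\sigma_i)$, whereas the semi-infinite $+$ state is $\inf_\Lambda\sup_M \mu^{+,J^\lambda}_{\Lambda;\tilde{\bm h}^M}(\sigma_i)$; monotonicity only gives $\sup\inf\le\inf\sup$, which is the wrong direction, and the missing equality is essentially equivalent to the theorem itself (granted $\mu_M^+=\mu_M^-$, it already forces uniqueness). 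The repair is to reverse the order of operations and use \eqref{Eq: Diff_mag_is_decreasing} \emph{in finite volume}: the difference $\langle\sigma_i\rangle^{+,J^\lambda}_{\Lambda;\tilde{\bm h}^{M}}-\langle\sigma_i\rangle^{\mp,J^\lambda}_{\Lambda;\tilde{\bm h}^{M}}$ is nonincreasing in $M$ and converges, as $M\to\infty$, to the difference of the finite-volume semi-infinite states on $\Lambda\cap\H+$, so for any fixed $M_0\ge\lambda$,
\begin{equation*}
0\;\le\;\langle\sigma_i\rangle^{+}_{\Lambda\cap\H+;\lambdadec}-\langle\sigma_i\rangle^{-}_{\Lambda\cap\H+;\lambdadec}\;\le\;\langle\sigma_i\rangle^{+,J^\lambda}_{\Lambda;\tilde{\bm h}^{M_0}}-\langle\sigma_i\rangle^{\mp,J^\lambda}_{\Lambda;\tilde{\bm h}^{M_0}},
\end{equation*}
and only now take $\Lambda\nearrow\Z^d$: the right-hand side tends to $\mu_{M_0}^{+}(\sigma_i)-\mu_{M_0}^{\mp}(\sigma_i)\le \mu_{M_0}^{+}(\sigma_i)-\mu_{M_0}^{-}(\sigma_i)=0$ by your finite-$M$ uniqueness, and Proposition \ref{basta.comparar.magnetizacoes} concludes. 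This is exactly how the paper closes the argument (with $M_0=\lambda$ and a reflected geometry); with this reordering your proof goes through and no interchange of limits is needed.
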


\begin{proof}
        Proposition \ref{basta.comparar.magnetizacoes} guarantees that it is enough to prove $\langle \sigma_{e_d} \rangle^{+}_{\bm{\lambda}} = \langle \sigma_{e_d} \rangle^{-}_{\bm{\lambda}}$, where $e_d=(0,\dots,0,1)$ is a base vector of $\H+$. By spin-flip symmetry, we can assume without loss of generality that $\lambda\geq 0$. Split $\mathbb{Z}^d$ in layers $L_k \coloneqq \mathbb{Z}^{d-1}\times\{k\}$, with $k\in\mathbb{Z}$. Lets first consider the case $0\leq\lambda \leq 2J$. For any $\Lambda_n=[-n,n]^{d-1}\times[1, n]$, we rewrite the semi-infinite model as the standard Ising model but now with interaction $\bm{J}_{\lambda}$ and external field $\bm{\lambda}_0$ given by 
    $$(\bm{\lambda}_0)_i = \begin{cases} 
                        \lambda/2 &\text{ if } i\in L_1,\\
                        \lambda\mid i_d \mid^{-\delta}, &\text{ if } i\in L_k, k>1,\\
                        0 &\text{ otherwise}, \\
                        \end{cases}$$ 
    for any $i,j\in\mathbb{Z}^d$. Hence, 
    \begin{align}
        &\langle \sigma_{e_d} \rangle^{+}_{\Lambda_n; \bm{\lambda}} = \langle \sigma_{e_d} \rangle^{+, \bm{J}_{\lambda}}_{\Lambda_n; \bm{\lambda}_0} &\text{ and } &&\langle \sigma_{e_d} \rangle^{-}_{\Lambda_n; \bm{\lambda}} = \langle \sigma_{e_d} \rangle^{\mp, \bm{J}_{\lambda}}_{\Lambda_n; \bm{\lambda}_0}
    \end{align}
where $(\mp)_i = \mathbbm{1}_{\{i\in\mathbb{Z}^d\setminus \H+\}} -  \mathbbm{1}_{\{i\in\H+\}}$. Consider $\bm{\lambda}_0^\prime$ an extension of $\bm{\lambda}_0$ to $\Z^d$ given by $(\lambda_0^\prime)_{i} \coloneqq (\lambda_0)_{i}$ when $i\in \H+$ and $(\lambda_0^\prime)_{i} \coloneqq (\lambda_0)_{i^\prime}$ when $i\in \mathbb{Z}^d\setminus\H+$, where $i^\prime = (i_1,\dots,i_{d-1}, -i_d)$. Since the boxes $\Lambda_n$ and $\Lambda_n^\prime \coloneqq [-n,n]^{d-1}\times [-n, -1]$ are not connected, taking $\Delta_n^\prime\coloneqq \Lambda_n \cup \Lambda_n^\prime$ we have 

\begin{align}
    &\langle \sigma_{e_d} \rangle^{+, \bm{J}_{\lambda}}_{\Lambda_n; \bm{\lambda}_0} = \langle \sigma_{e_d} \rangle^{+, \bm{J}_{\lambda}}_{\Delta_n^\prime; \bm{\lambda}^\prime_0} &\text{ and } && \langle \sigma_{e_d} \rangle^{\mp, \bm{J}_{\lambda}}_{\Lambda_n; \bm{\lambda}_0} = \langle \sigma_{e_d} \rangle^{\mp, \bm{J}_{\lambda}}_{\Delta_n^\prime; \bm{\lambda}^\prime_0}.
\end{align}
For every $h\geq0$, let $\bm{h}^w$ be an external field acting only on $L_0$, that is  $h^w_i=h\mathbbm{1}_{\{i\in L_0\}}$ for all $i\in\Z^d$. Then, denoting $\Delta_n=\Delta_n^\prime\cup L_0$,

\begin{align}
    &\langle \sigma_{e_d} \rangle^{+, \bm{J}_{\lambda}}_{\Delta_n^\prime; \bm{\lambda}^\prime_0} = \lim_{h\to\infty}\langle \sigma_{e_d} \rangle^{+, \bm{J}_{\lambda}}_{\Delta_n; \bm{\lambda}^\prime_0 + \bm{h}^w} &\text{ and } &&  \langle \sigma_{e_d} \rangle^{\mp, \bm{J}_{\lambda}}_{\Delta_n^\prime; \bm{\lambda}^\prime_0} = \lim_{h\to \infty}  \langle \sigma_{e_d} \rangle^{\mp, \bm{J}_{\lambda}}_{\Delta_n; \bm{\lambda}^\prime_0 + \bm{h}^w}.
\end{align}
Differentiating in $h$ and using Proposition  \ref{Consequence.of.DVI}, we see that the difference $\langle \sigma_{e_d} \rangle^{+, \bm{J}_{\lambda}}_{\Delta_n; \bm{\lambda}^\prime_0 + \bm{h}^0} - \langle \sigma_{e_d} \rangle^{\mp, \bm{J}_{\lambda}}_{\Delta_n; \bm{\lambda}^\prime_0 + \bm{h}^0}$ is decreasing in $h$ for $h\geq 0$. Proposition \ref{Consequence.of.DVI} was stated for the semi-infinite states, but it also holds for Ising states since the duplicated variables inequalities are in this generality. We can then bound the difference of the states by choosing the particular case $h=\lambda$. Denoting $\bm{\lambda}_l = \bm{\lambda}^\prime_0 + \bm{\lambda}^0$, we conclude that 
\begin{equation}\label{eq: uniqueness}
    \langle \sigma_{e_d} \rangle^{+}_{\Lambda_n; \bm{\lambda}}- \langle \sigma_{e_d} \rangle^{-}_{\Lambda_n; \bm{\lambda}}\leq \langle \sigma_{e_d} \rangle^{+, \bm{J}_{\lambda}}_{\Delta_n; \bm{\lambda}_l} - \langle \sigma_{e_d} \rangle^{\mp, \bm{J}_{\lambda}}_{\Delta_n; \bm{\lambda}_l}.
\end{equation}
Again by Proposition \ref{Consequence.of.DVI}, the RHS of equation \eqref{eq: uniqueness} in non-increasing in $h_i$, the external field on the site $i$, for any $i\in\mathbb{Z}^d$. So, denoting $\bm{\lambda}^{Is}$ the external field \eqref{particular.external.field} considered in \cite{Bissacot_Cass_Cio_Pres_15} with $h^*=\lambda/2$, we have that, for any $i\in\mathbb{Z}^d$, $h_i^{Is}=\frac{\lambda}{2}|i|^{-\delta}\leq \frac{\lambda}{2}|i_d|^{\delta}\leq (\bm{\lambda}_l)_i$, hence 
\begin{equation}
    \langle \sigma_{e_d} \rangle^{+}_{\Lambda_n; \bm{\lambda}}- \langle \sigma_{e_d} \rangle^{-}_{\Lambda_n; \bm{\lambda}}\leq \langle \sigma_{e_d} \rangle^{+, \bm{J}_{\lambda}}_{\Delta_n; \bm{\lambda}^{Is}} - \langle \sigma_{e_d} \rangle^{\mp, \bm{J}_{\lambda}}_{\Delta_n; \bm{\lambda}^{Is}}.
\end{equation}
Taking the limit in $n$, the RHS of equation above goes to $\langle \sigma_{e_d} \rangle^{+, \bm{J}_{\lambda}}_{\bm{\lambda}^{Is}} - \langle \sigma_{e_d} \rangle^{\mp, \bm{J}_{\lambda}}_{\bm{\lambda}^{Is}}$, that is equal to $0$ by Theorem \ref{Theo: Uniquiness_Ising_J_lambda}. We conclude that $\langle \sigma_{e_d} \rangle^{+}_{\bm{\lambda}} - \langle \sigma_{e_d} \rangle^{-}_{\bm{\lambda}}=0$, so there is only one Gibbs state by Proposition \ref{basta.comparar.magnetizacoes}.

When $\lambda > 2J$, we replace $\bm{J}_\lambda$ by $\bm{J}\equiv J$ in the argument above and the same proof holds with minor adjustments. 
\end{proof}

\section{Concluding Remarks}

This paper studied the semi-infinite Ising model with an external field of the form $h_i = \lambda|i_d|^{-\delta}$, where $\lambda$ is the wall influence. We proved the existence of a critical value $\overline{\lambda}_c$ for which there is phase transition for $0\leq \lambda<\overline{\lambda}_c$ and there is uniqueness for $\lambda>\overline{\lambda}_c$. Moreover, we proved that when $\delta<1$, $\overline{\lambda}_c=0$, there is uniqueness for all temperatures and all wall parameters.

The external field $\lambdadec$ resembles the long-range Ising model interaction, given by $J_{x,y}=|x-y|^{\alpha}$, with $\alpha>d$. One natural question is if our results can be extended to a long-range semi-infinite Ising model. Phase transition for the long-range model with the decaying field was proved in \cite{Affonso.2021}, but there is no uniqueness argument for $\delta<1$. We believe the case $\delta>1$ should extend with little effort. However, the proof of uniqueness when $\delta<1$ should be far from trivial since it is already the case for short-range models. 

Our methods are not suited to treat the critical exponent $\delta=1$, since the wall-free energy may be ill-defined. In this critical region, the Ising model with external field \eqref{particular.external.field} presents phase transition when $h^*$ is small enough. It is also expected that the model presents uniqueness for $h^*$ large enough, but no proof is available. This indicates that the semi-infinite model with $\delta=1$ should also have a critical wall influence that dictates phase transition.  

The semi-infinite model has a second type of phase transition, called \textit{layering transition}, which occurs when the wall influence and the external field have opposite signs. Some results about this transition can be found in \cite{FP-II}. There are no results on this kind of transition when we consider decaying fields. 

In recent years, there have been a lot of developments in the random field Ising model, which consists of the usual Ising model with a random external field. There are no rigorous results for the semi-infinite model with random perturbation. The random field disrupts phase transition in the two-dimensional Ising model. The same is expected to happen in the semi-infinite model. In the semi-infinite model, we can also consider a random wall influence $\lambda$, that is site dependent, and study if this affects the phase transition. 

\section*{Acknowledgements}

RB is supported by CNPq grant 408851/2018-0 and by FAPESP grant 16/25053-8. JM is supported by FAPESP Grant 2018/26698-8 and 2016/25053-8. JM is very grateful to Yvan Velenik for fruitful discussions while participating in the NCCR SwissMAP Masterclass program 2019/2020 during his stay at the Universit\'e de Gen\`eve.

\bibliographystyle{habbrv} 
\bibliography{main}  

\end{document}